\documentclass[journal]{IEEEtran}[12pt]
%
%
\ifCLASSINFOpdf
\usepackage[pdftex]{graphicx}
\graphicspath{{../pdf/}{../jpeg/}}
\DeclareGraphicsExtensions{.pdf,.jpeg,.png}
\else
\usepackage[dvips]{graphicx}
\graphicspath{{../eps/}}
\DeclareGraphicsExtensions{.eps}
\fi\usepackage{graphicx}

\usepackage{graphics}
\usepackage{epsfig}
\usepackage{epstopdf}
\usepackage{stfloats}
\usepackage[cmex10]{amsmath}
\usepackage{algorithmic}
\usepackage{array}
\usepackage{mdwmath}
\usepackage{mdwtab}
\usepackage{graphicx}
\usepackage{subfigure}
\usepackage{color}
\usepackage{amsfonts,amssymb}
\usepackage{hyperref} 
\usepackage{multirow}
\usepackage{diagbox}
\usepackage{caption}

\usepackage{amsfonts,amsthm,array} 

\newtheorem{theorem}{Theorem}

\newtheorem{corollary}{Corollary}

\newtheorem{remark}{Remark}

\begin{document}
	
\title{Outage Analysis of Aerial Semi-Grant-Free NOMA Systems}
	
\author{Hongjiang~Lei, 
		Chen~Zhu,
		Ki-Hong~Park, 
		Imran~Shafique~Ansari, \\
		Weijia Lei, 
		Hong~Tang,
		and~Kyeong~Jin~Kim 
		\thanks{This work was supported by the National Natural Science Foundation of China under Grant 61971080 and the Open Fund of the Shaanxi Key Laboratory of Information Communication Network and Security under Grant ICNS201807.}
		\thanks{Hongjiang Lei, Chen Zhu, Weijia Lei, and Hong Tang are with the School of Communication and Information Engineering, Chongqing University of Posts and Telecommunications, Chongqing 400065, China, also with Chongqing Key Lab of Mobile Communications Technology, Chongqing 400065, China, and H. Lei is also with Shaanxi Key Laboratory of Information Communication Network and Security, Xi'an University of Posts and Telecommunications, Xi'an, Shaanxi 710121, China (e-mail: leihj@cqupt.edu.cn).}
		\thanks{Ki-Hong~Park is with CEMSE Division, King Abdullah University of Science and Technology (KAUST), Thuwal 23955-6900, Saudi Arabia (e-mail: kihong.park@kaust.edu.sa).}
		\thanks{Imran~Shafique~Ansari is with James Watt School of Engineering, University of Glasgow, Glasgow G12 8QQ, United Kingdom (e-mail: imran.ansari@glasgow.ac.uk).}
		\thanks{Kyeong~Jin~Kim is with the Mitsubishi Electric Research Laboratories, Cambridge, MA 02139 USA (e-mail: kkim@merl.com).}		
}
	
\maketitle
\begin{abstract}

In this paper, we analyze the outage performance of Unmanned Aerial Vehicles (UAVs)-enabled downlink Non-Orthogonal Multiple Access (NOMA) communication systems with the Semi-Grant-Free (SGF) transmission scheme.
A UAV provides coverage services for a Grant-Based (GB) user and one Grant-Free (GF) user is allowed to utilize the same channel resource opportunistically.
The analytical expressions for the exact and asymptotic Outage Probability (OP) of the GF user are derived.
The results demonstrate that no-zero diversity order can be achieved only under stringent conditions on users' quality of service requirements.
Subsequently, an efficient Dynamic Power Allocation (DPA) scheme is proposed to relax such data rate constraints.
The analytical expressions for the exact and asymptotic OP of the GF user with the DPA scheme are derived.
Finally, Monte Carlo simulation results are presented to validate the correctness of the derived analytical expressions and demonstrate the effects of the UAV's location and altitude on the OP of the GF user.

\end{abstract}
	
\begin{IEEEkeywords}
Unmanned aerial vehicle, non-orthogonal multiple access, semi-grant-free, outage probability.
\end{IEEEkeywords}

\section{Introduction}
\label{sec:introduction}

\subsection{Background and related works}
In recent years, Unmanned Aerial Vehicles (UAVs) have been envisioned to play an essential role in space-air-ground integrated networks due to the flexibility of deployment, controllable mobility, and low costs \cite{WuQ2021JSAC}. 
Multiple access techniques are essential to integrate UAVs into 5G networks and beyond.
Non-Orthogonal Multiple Access (NOMA) is regarded as a profitable candidate for 5G networks because of its merit in providing higher Spectral Efficiency (SE) and supporting massive connectivity \cite{LiuY2019WC}.
Multiple users are served simultaneously in non-orthogonal channel resources by isolating the users in the power domain \cite{DingZ2017Mag}.
Using NOMA technology, UAVs can provide services for multiple users over the same resource block.
Quite a few current investigations have considered using NOMA to improve the performance of UAV-enabled communication systems.
Hou \emph{et al.} focused on addressing the spatial distribution problem of NOMA-enhanced UAV network by utilizing stochastic geometry tools \cite{HouT20191TCOM}-\cite{HouT2020IoT}.
In \cite{HouT20191TCOM}, a new 3D UAV framework for downlink wireless service to randomly roaming NOMA users was proposed.
Analytical expressions for the Outage Probability (OP) and Ergodic Capacity (EC) of Multiple-Input-Multiple-Output (MIMO)-NOMA-enhanced UAV networks were derived.
In \cite{HouT2020TVT}, the NOMA-enhanced UAV-to-everything networks were investigated. The closed-form expressions for the OP and EC of the paired NOMA receivers were derived.
In \cite{HouT20192TCOM}, the UAV-centric strategy for offloading actions and the user-centric strategy for providing emergency network were considered, and the analytical expressions of the Coverage Probability (CP) for both scenarios with imperfect Successive Interference Cancelation (SIC) were derived.

NOMA-enhanced terrestrial Internet of Things (IoT) networks and NOMA-enhanced aerial IoT networks were investigated in \cite{HouT2020IoT}, and new channel statistics were derived for both terrestrial and aerial users. 
Then, the analytical expressions for the exact and the asymptotic coverage probability were derived. 
{To maximize the sum capacity of NOMA-enabled backscatter communication systems, the transmit power of IoT users at BS and the reflection coefficient of backscatter tag were jointly optimized in \cite{KhanWU2021ITL}. The convex optimization problems were solved by Karush-Kuhn-Tucker conditions.}

Performance optimization of NOMA-aided UAV systems has been well-researched in many works.
In \cite{ZhaoN2019TCOM}, a UAV-assisted NOMA system was studied where the UAV assisted the Base Station (BS) in providing services to the ground users.
The sum rate was maximized by optimizing the UAV trajectory and the NOMA precoding. 
In \cite{ZhaiD2021TCOM}, an aerial Decode-and-Forward (DF) cooperative NOMA network was jointly optimized concerning UAV height, channel allocation, and power allocation to maximize the data rate.
Ref. \cite{ZhangH2020SAC} considered the Energy Efficiency (EE) of the UAV's communication with imperfect channel state information where the EE was maximized by designing user scheduling and power allocation.
Authors in \cite{WangW2021TCOM} proposed a time-efficient data collection scheme where multiple ground devices upload their data to the UAV via uplink NOMA. The duration of each time slot was minimized by jointly optimizing the trajectory, device scheduling, and transmit power.
In \cite {LiuM2020IoT}, the location of the UAV and power allocation were jointly optimized to enhance the EE and SE of NOMA-aided UAV systems.

Conventional wireless communication systems operate on Grant-Based (GB) protocols \cite{ShirvanimoghaddamM2017Mag}. 
Each device first transmits a scheduling request to BS and then sends a grant back for resource allocation.
The lengthy handshaking process of requesting a grant will be prohibitively costly for the signaling overhead and unacceptable for the resulting latency in massive IoT data transmissions \cite{ParkD2005TCOM}.
Ding \emph{et al.} proposed NOMA-assisted Semi-Grant-Free (SGF) transmission schemes and two contention control mechanisms have been presented to ensure that the number of users admitted to the same channel was carefully controlled in \cite{DingZ2021TCOM}.
And then, a new SGF transmission scheme combining the flexibility in choosing the decoding order in NOMA was proposed.
The closed-form expressions for the exact and asymptotic OP were derived. The results show that the NOMA-assisted SGF transmission scheme effectively addresses the problem of the aforementioned request-grant process and the spectrum reserved.
Based on \cite{DingZ2021TCOM}, a new Power Control (PC) strategy was proposed to guarantee no OP floor entirely by adjusting the transmit power of Grant-Free (GF) user to control the decoding order of SIC at the base station in \cite{SunY2021TVT}.
In \cite{YangZ2020WCL}, an adaptive power allocation strategy was proposed founded on the relationship between the target date rate of GB users and channel conditions of both GB and GF users.
Authors in \cite{LuH2022TWC} considered a NOMA system with multiple randomly distributed GF users. 
The analytical expressions for the OP with fixed transmit power and dynamic power control strategy were derived, and the small-scale fading, path loss, and random user locations were considered. 
Furthermore, the outage performance of GF users was analyzed under the best user scheduling scheme and Cumulative Distribution Function (CDF)-based scheduling scheme. 
In \cite{ZhangC2022TWC}, an uplink SGF NOMA system with multiple randomly deployed GF users was studied by utilizing stochastic geometry techniques. 
A dynamic protocol was proposed to interpret part of the GF users that are paired to NOMA groups. 
The outage performance and diversity gains under dynamic and open-loop protocols were investigated, and numerical results show that dynamic protocol effectively improves the outage performance.
Moreover, the analytical expressions for the exact and approximated EC were derived in \cite{ZhangC2021WCL}. 
{The secrecy performance of NOMA-aided SGF was investigated in \cite{LeiH2022TWC} wherein the analytical expressions for the Secrecy Outage Probability (SOP) for the scenarios with a single GF user and multiple GF users were derived, respectively.}
In \cite{ChenJ2022JSAC}, an Intelligent Reconfigurable Surface (IRS)-assisted SGF NOMA system was investigated, in which the IRS enhanced the channel gains for GB and GF users. The sum rates of GF users were maximized by jointly optimizing the sub-carrier assignment, the power allocation of GF users, and the IRS amplitude and phase shift.

{Table \ref{table1} outlines related works on performance analysis of NOMA-aided SGF systems.}
\begin{table*}[!ht]
\centering
\caption{\emph{{Recent literature related to performance analysis of NOMA-aided SGF systems}.}}
{\footnotesize
	{
		\begin{center}
			\begin{tabular}{| c | c | c | c | c |}
				\hline
				\textbf{Reference} &\textbf{downlink/uplink NOMA} &\textbf{UAV} &\textbf{Method} &\textbf{Performance Metrics}\\
				\hline
				\cite{HouT20191TCOM}               	 & downlink & \checkmark       & Performance analysis 		 & OP, EC \\
				\hline
				\cite{HouT2020TVT}               	 		& downlink & \checkmark       & Performance analysis 	    & OP, EC \\
				\hline
				\cite{HouT20192TCOM}               	 & downlink & \checkmark       & Performance analysis 		 & CP \\
				\hline
				\cite{HouT2020IoT}                	 		& uplink & \checkmark       & Performance analysis 	     & CP \\
				\hline
				\cite{KhanWU2021ITL}                	 & downlink &     				  & Optimization 					& Sum rate   \\
				\hline
				\cite{ZhaoN2019TCOM}                 & downlink & \checkmark     & Optimization    				  & Sum rate   \\
				\hline
				\cite{ZhaiD2021TCOM}                   & downlink & \checkmark    & Optimization 					   & Sum rate   \\
				\hline
				\cite{ZhangH2020SAC}                	& downlink & \checkmark    & Optimization 					& EE   \\
				\hline			
				\cite{WangW2021TCOM}                & uplink & \checkmark    & Optimization 					& Flight time   \\
				\hline
				\cite{LiuM2020IoT}                		     & uplink & \checkmark    & Optimization 					& SE, EE   \\
				\hline
				\cite{DingZ2021TCOM}                    & uplink, SGF &     & Performance analysis  				& OP \\
				\hline
				\cite{SunY2021TVT}                 	 	    & uplink, SGF &     & Performance analysis  				& OP     \\				
				\hline
				\cite{YangZ2020WCL}               		  & uplink, SGF &     & Performance analysis  				 & OP     \\
				\hline
				\cite{LuH2022TWC}                  			& uplink, SGF &     & Performance analysis  				& OP     \\
				\hline
				\cite{ZhangC2022TWC}                 	 & uplink, SGF &     & Performance analysis      			& OP  \\
				\hline
				\cite{ZhangC2021WCL}                	 & uplink, SGF &     & Performance analysis  				& EC\\
				\hline
				\cite{LeiH2022TWC}                	 		& uplink, SGF &     & Performance analysis  				& SOP\\
				\hline
				\cite{ChenJ2022JSAC}                	   & uplink, SGF &     & Deep reinforcement learning 	 & Average data rate    \\
				\hline			
			\end{tabular}
	\end{center}}
	\label{table1}	}
	\end{table*}
	
	\subsection{Motivation and contributions}
	
	In light of the above-discussed works, the performance of uplink NOMA systems with the SGF scheme has been studied in varying scenarios. However, there still needs to be more research contributions on investigating the performance of UAV communication systems with the SGF scheme, which motivates this work.
	In this work, we dedicate ourselves to developing new SGF schemes utilized in UAV-enabled downlink NOMA communication systems. The impacts of the UAV's location and altitude on outage performance are studied. 
	The main contributions of this paper are outlined as follows:
	\begin{enumerate}

\item We investigate the outage performance of the GF user and the analytical expression for the exact OP is derived. We also analyze the asymptotic OP and the achievable diversity orders in the higher-signal-to-noise ratio (SNR) region to obtain more insights. The results demonstrate that no-zero diversity order can be achieved only under stringent conditions on users' quality of service requirements.

\item An efficient Dynamic Power Allocation (DPA) scheme is proposed to guarantee no floor without any conditions on users' target rates. The analytical expressions for the exact and asymptotic OP of the GF user with the DPA scheme are derived. 
The results demonstrate no OP floor for all the users' target rates when the DPA scheme is utilized.

\item Monte Carlo simulation results are presented to validate the correctness of the derived analytical expressions and demonstrate the effects of the UAV's location and altitude on the OP of the GF user.

\item Relative to \cite{DingZ2021TCOM} - \cite{LuH2022TWC} wherein the performance of the uplink NOMA-aided SGF systems was investigated, the transmit power and the Channel State Information (CSI) of GB users must be known at the GF user to realize the power control.
This work studied the performance of the downlink NOMA-aided SGF systems, where the power allocation is utilized at the base station and the GF user need not know the CSI of the GB user.

\end{enumerate}

\subsection{Organization}
The rest of this paper is organized as follows. Section \ref{sec:SystemModel} describes the considered system model and the SGF scheme utilizing in downlink NOMA systems. 
The OP of the GF user with Fixed Power Allocation (FPA) and DPA are analyzed in Sections \ref{sec:FPA} and \ref{sec:DPA}, respectively. 
Section \ref{sec:RESULTS} presents the simulation results to demonstrate the analysis and the paper is concluded in Section \ref{sec:Conclusion}.

\section{System model}
\label{sec:SystemModel}

\subsection{System model}

\begin{figure}[t]
\centering
\includegraphics[width = 2.53in]{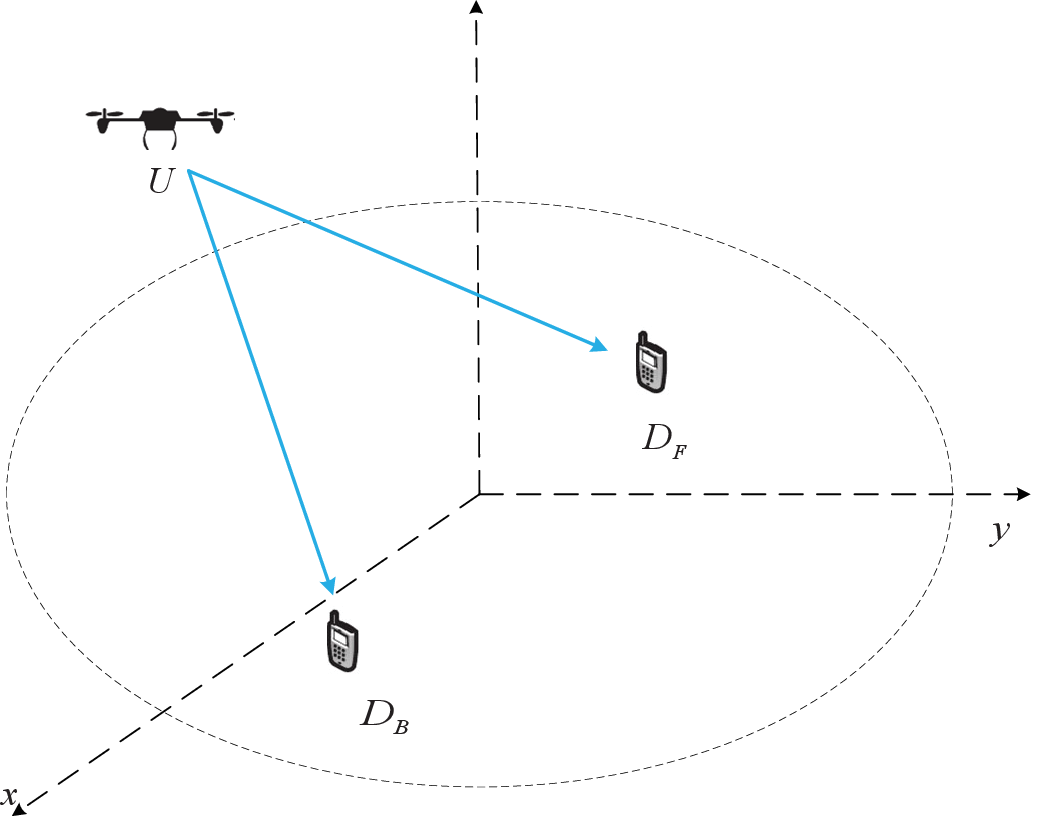}
\caption{A downlink UAV-based NOMA communication system consisting of one UAV ($U$) and two legitimate users (${D_B}$ and ${D_F}$).
}
\label{fig1model}
\end{figure}

As shown in Fig. \ref{fig1model}, we consider a UAV-enabled downlink NOMA system that consists of an aerial base station $\left( U \right)$, a GB user $\left( {{D_B}} \right)$, and a GF user 
$\left( {{D_F}} \right)$\footnote{ Although only two users are considered in this work, our results can be easily extended to NOMA systems with more than two users by utilizing the hybrid multiple access scheme proposed in \cite{YueX2018TCOMUN}, \cite{YueX2020TWC}, \cite{VaeziM2019WC}. }. 
Similar to \cite{DingZ2021TCOM}, $D_B$ is allocated to one dedicated resource block and $D_F$ will gain admission to the resource block opportunistically.
Without loss of generality, a 3D cartesian coordinate system with the origin at $O$ is utilized. 
The coordinates of $D_B$ and $D_F$ are denoted as $\left( {{x_B},{y_B},0} \right)$ and $\left( {{x_F},{y_F},0} \right)$, respectively. 
The coordinate of $U$ is denoted as $\left( {{x_U},{y_U}, {z_U}} \right)$ and the distance between $U$ and $D_X$ $\left( {X \in \left\{ {B,F} \right\}} \right)$ is expressed as
\begin{equation}
{d_X}= \sqrt {{{\left( {{x_X} - {x_U}} \right)}^2} + {{\left( {{y_X} - {y_U}} \right)}^2} + {{z_U^2}}}
\label{X_dn}
\end{equation}
Due to possible obstacles between the Air-to-Ground (A2G) links, Line-of-Sight (LoS) and Non-Line-of-Sight (NLoS) connections are probabilistically considered in this work. As a result, the average path loss between $U$ and $D_X$ is expressed as \cite{Zhou2018TVT}
\begin{equation}
{{\bar g}_{X}} = \left( {P_{X}^{\rm{L}}{\eta _{\rm{L}}} + P_{X}^{{\rm{nL}}}{\eta _{{\rm{nL}}}}} \right)d_{X}^{{\alpha _{X}}}
\label{average_path_loss0}
\end{equation}
where ${\eta _{{\rm{L}}}}$ and ${\eta _{{\rm{nL}}}}$ signify the attenuation factor to the LoS and NLoS links, respectively,
$P_X^{\rm{L}}$ signifies the probability of LoS connection,
$P_X^{\rm{L}} = {\left( {1 + {a_0}{e^{ - {b_0}\left( {\frac{{180}}{\pi }{\theta _X} - {a_0}} \right)}}} \right)^{ - 1}}$,
${{a_0}}$ and ${{b_0}}$ are environmental parameters as listed in Table I and II of 
\cite{Hourani2014WCL},
${\theta _{X}} = \arcsin \left( {\frac{H}{{{d_{X}}}}} \right)$ denotes elevation angle (in radians),
$P_{X}^{{\rm{nL}}}= 1 - P_{X}^{\rm{L}}$, and
${\alpha _X}$ is the path loss exponent.	
The relationship between ${\alpha _X}$ and ${\theta _X}$ is expressed as \cite{Azari2018TCOM}
\begin{equation}
{\alpha _X}\left( {{\theta _X}} \right) = {b_1}P_X^{\rm{L}} + {b_2}
\label{alphaX}
\end{equation}
where
${b_1} \approx {\alpha _{\frac{\pi }{2}}} - {\alpha _0}$
and
${b_2} \approx {\alpha _0}$.
In this work, we set ${\alpha _{\frac{\pi }{2}}} = 2$ and ${\alpha _0} = 4$.

It is assumed that the fading coefficients, $h_X$, in the A2G links experience independent and identically (i.i.d) Nakagami-$m$ fading. 
Denoting ${G_X} = \frac{{{h_{X}}}}{{{{\bar g}_{{X}}}}}$, the Probability Density Function (PDF) and the CDF of ${G_{X}}$ are expressed by
\begin{equation}
{F_{{G_X}}}\left( x \right) = 1 - {e^{ - {\lambda _X}x}}\sum\limits_{i = 0}^{{m} - 1} {\frac{{{{\left( {{\lambda _X}x} \right)}^i}}}{{i!}}}
\label{G_X_cdf}
\end{equation}
\begin{equation}
{f_{{G_X}}}\left( x \right) = \frac{{\lambda _X^m}}{{\Gamma \left( m \right)}}{x^{m - 1}}{e^{ - {\lambda _X}x}}
\label{G_X_pdf}
\end{equation}
where $m$ denotes the fading parameter that is integer,
${\lambda _X} = {m}{\bar g_X}$,
and $\Gamma \left(  \cdot  \right)$ is the Gamma function as defined by \cite[(8.310.1)]{GradshteynBook}.

\subsection{SGF schemes}
\label{sec:SGF Schemes}

To ensure the $D_B$'s Quality-of-Service (QoS) in the worse scenario, the $D_B$'s rate is constrained as \cite{DingZ2021TCOM}
\begin{equation}
{\log _2}\left( {1 + \frac{{\rho {\omega }{G_B}}}{{1 + \rho {\bar \omega }{G_B}}}} \right)  \ge  {R_{{\rm{th}}}^B}
\label{qosDB}
\end{equation}
where
$\rho  = \frac{P}{{{\sigma ^2}}}$,
$P$ is the transmit power of $U$,
${{\sigma ^2}}$ is the variance of complex zero mean Additive White Gaussian Noise (AWGN),
$0 \le \omega  \le 1$ denotes the power allocation coefficient for $D_B$,
$\bar \omega = 1 - \omega$,
and
$R_{{\rm{th}}}^B$ denotes the Reliability Rate Threshold (RRT) of $D_B$.
Then, we have
\begin{equation}
{\omega }  \ge  \frac{{\left( {\rho {G_B} + 1} \right)\left( {{\Theta _B} - 1} \right)}}{{\rho {G_B}{\Theta _B}}}
\label{pacDB}
\end{equation}
\begin{equation}
\rho  \ge \frac{{{\Theta _B} - 1}}{{{G_B}\left( {{\Theta _B}{\omega } - \left( {{\Theta _B} - 1} \right)} \right)}}
\label{rhoDB}
\end{equation}
\begin{equation}
{G_B} \ge \frac{{{\Theta _B} - 1}}{{\rho \left( {1 - {\Theta _B}{\bar \omega }} \right)}}
\label{GBDB}
\end{equation}
where
${\Theta _B} = {2^{{R_{{\rm{th}}}^B}}}$.
Eqs. (\ref{pacDB}), (\ref{rhoDB}), and (\ref{GBDB}) denote the conditions in which the power allocation coefficient, the transmission SNR, and the channel coefficient of $D_B$ must meet for the other given parameters.

\begin{remark}
When $\omega  = 1$, one can easily find that decoding its own signal would fail at $D_B$ if ${\log _2}\left( {1 + \rho {G_B}} \right) < {R_{{\rm{th}}}^B}$, which means to ensure that $D_B$ decodes its own signal successfully, there is a constraint
\begin{equation}
	{G_B} > {\varepsilon _1}
	\label{eq10}
\end{equation}
where
${\varepsilon _1} = \frac{{{\Theta _B} - 1}}{\rho }$.
\end{remark}

It must be noted that the SGF scheme only guarantees that admitting the GF user is transparent to the GB user whose QoS experience is the same when it occupies the channel alone \cite{LeiH2022TWC}. In other words, the SGF scheme does not always guarantee no outage for $D_B$.
When the fading over the link between $U$ and $D_B$ is too strong, the constraint in (\ref{eq10}) can not be satisfied thereby the SGF scheme can not be utilized because there are no signals to $D_F$ to avoid any performance degradation for $D_B$.

Considering the decoding order at $ D_F$ for given $\omega $ and $\rho$, 
according to the principle of SIC in NOMA, if the $D_B$'s signal can be successfully decoded in the first stage of SIC and deleted from the superimposed signal received at $D_F$, the interference can be eliminated and the maximum achievable rate can be obtained.
Because ${\log _2}\left( {1 + \frac{{\rho {\omega }x}}{{1 + \rho {\bar \omega }x}}} \right) = {\log _2}\left( {1 + \frac{{\rho {\omega }}}{{\frac{1}{x} + \rho {\bar \omega }}}} \right)$ is an increasing function of $x$ and ${\log _2}\left( {1 + \frac{{\rho {\omega }{G_F}}}{{1 + \rho {\bar \omega }{G_F}}}} \right) > {R_{{\rm{th}}}^B}$ holds when ${G_F} > {G_B}$.
Then, the condition of $D_F$ decoding $D_B$'s signal at the first stage of SIC is ${G_F} > {G_B}$.
According to this, there are two different decoding orders at $D_F$, which is stated as follows:
\begin{itemize}
\item Case 1: $U$ is located in the region where the channel condition of $D_F$ is stronger than that of $D_B$, namely ${G_F} > G_B$. In this scenario, $D_B$'s signal can be decoded at the first stage or the second stage of SIC. Accordingly, $D_F$ will achieve a data rate of  ${\log_2} \left( {1 + {\bar \omega }\rho {G_F}} \right)$ or $\log \left( {1 + \frac{{{\bar \omega }\rho {G_F}}}{{1 + \rho {\omega }{G_F}}}} \right)$. To maximize its data rate, $D_F$ will decode $D_B$'s signal at the first stage of SIC due to $\log \left( {1 + \frac{{{\bar \omega }\rho {G_F}}}{{1 + \rho {\omega }{G_F}}}} \right) < {\log_2} \left( {1 + {\bar \omega }\rho {G_F}} \right)$. Then, the achievable rate of $D_F$ in this scenario is expressed as
\begin{equation}
	R_F^1 = \log _2 \left( {1 + {\bar \omega }\rho {G_F}} \right)
	\label{rateF1}
\end{equation}
\item Case 2: $U$ is located in the region where the channel condition of $D_F$ is weaker than that of $D_B$, namely ${G_F} < G_B$. In this scenario, $D_F$ cannot decode $D_B$'s signal at the first stage of SIC since ${G_F} < G_B $ leads to ${\log _2}\left( {1 + \frac{{\rho {\omega }{G_F}}}{{1 + \rho {\bar \omega }{G_F}}}} \right) < {R_{{\rm{th}}}^B}$. Therefore, $D_F$ must decode its own signal firstly, which achieves the data rate as
\begin{equation}
	R_F^2 = {\log _2}\left( {1 + \frac{{{\bar \omega }\rho {G_F}}}{{1 + \rho {\omega }{G_F}}}} \right)
	\label{rateF2}
\end{equation}
\end{itemize}

Then, the achievable rate of $D_F$ is expressed as
\begin{equation}
{R_F} = \left\{ {\begin{array}{*{20}{c}}
		{R_F^1,}&{{G_F} > {G_B}} \\
		{R_F^2,}&{{G_F} < {G_B}}
\end{array}} \right.
\label{rateF}
\end{equation}

\section{Outage performance analysis with fixed power allocation}
\label{sec:FPA}

Note that when the SGF scheme is adopted, the QoS requirements of the GB user will be met first. 
The GF user can utilize GB's channel only when the GB user can achieve its target rate. 
If the outage happens to the GB user, it means that the channel resource is insufficient to meet the QoS requirements of the GB user; therefore, the GF user is not allowed to access the channel, and the GB user occupies the channel alone. 
Thus, $D_B$ can consistently achieve the same outage performance as in Orthogonal Multiple Access (OMA) and so we only investigate $D_F$'s outage performance.

In this section, it is assumed that $D_B$ is allocated to a fixed power $\omega = \min \left\{ {\frac{{\left( {\rho {G_B} + 1} \right)\left( {{\Theta _B} - 1} \right)}}{{\rho {G_B}{\Theta _B}}},1} \right\}$ that gives priority to meet its QoS requirement. To distinguish the scheme proposed in Section \ref{sec:DPA}, this scheme is termed as FPA scheme.
The OP of $D_F$ with the SGF scheme is expressed by
\begin{equation}
{P_{{\rm{out}}}} = \underbrace {\Pr \left\{ {{G_B} < {\varepsilon _1}} \right\}}_{ \triangleq {T_0}} + \underbrace {\Pr \left\{ {{G_B} > {\varepsilon _1},R_F < {R_{{\rm{th}}}^F}} \right\}}_{ \triangleq {T_1}}
\label{out1}
\end{equation}
where
${R_{{\rm{th}}}^F}$ signifies the RRT of $D_F$,
$T_0$ denotes $D_B$ is in outage when it occupies the channel alone, which means $D_F$  is not allowed to utilize the channel,
and $T_1$ denotes $D_F$ is in outage when it is allowed to utilize $D_B$'s channel.

\begin{remark}
When $\rho \to \infty$, there is ${\varepsilon _1} \to 0$, then $T_0 \to 0$. Thus, ${{T_0}}$ is the main part of ${P_{out}}$ in the lower-$\rho$ region while ${{T_1}}$ is the main part of ${P_{out}}$ in larger-$\rho$ region.
\end{remark}

The following theorem provides the analytical expression for the OP of $D_F$.
\begin{theorem}
\label{theorem1}
The analytical expression for the OP of $D_F$ is expressed as	
\begin{equation}
	P_{{\rm{out}}}^{{\rm{FPA}}} = \left\{ {\begin{array}{*{20}{c}}
			{{T_0} + {T_{11}} + {T_{12a}},}&{{\Theta _{th}} < \frac{{{\Theta _B}}}{{{\Theta _B} - 1}}}\\
			{{T_0} + {T_{11}} + {T_{12b}},}&{{\Theta _{th}} > \frac{{{\Theta _B}}}{{{\Theta _B} - 1}}}
	\end{array}} \right.
	\label{theorem1eq}
\end{equation}
where
${\Theta _{th}} = {2^{R_{{\rm{th}}}^F}}$,
${T_0} = {F_{{G_B}}}\left( {{\varepsilon _1}} \right)$,
${T_{11}} = {\chi _1} - {\chi _2}$,
${T_{12a}} = {\chi _3} + {\chi _4}$,
${T_{12b}} = {{\bar F}_{{G_B}}}\left( {{\varepsilon _1}} \right) - {A_1}\sum\limits_{i = 0}^{m - 1} {\frac{{\lambda _F^i\Gamma \left( {i + m,{A_2}{\varepsilon _1}} \right)}}{{i!{A_2}^{i + m}}}}$,
${\chi _1} = {F_{{G_B}}}\left( {{\varepsilon _0}} \right) - {F_{{G_B}}}\left( {{\varepsilon _1}} \right) - {A_1}{\Phi _1}$,
${\chi _2} = {F_{{G_B}}}\left( {{\varepsilon _0}} \right) - {F_{{G_B}}}\left( {{\varepsilon _1}} \right) - {A_1}\sum\limits_{i = 0}^{m - 1} {\frac{{\lambda _F^i{\Phi _2}}}{{i!}}}$,
${\chi _3} = {F_{{G_B}}}\left( {{\varepsilon _5}} \right) - {F_{{G_B}}}\left( {{\varepsilon _1}} \right) - {A_1}\sum\limits_{i = 0}^{m - 1} {\frac{{\lambda _F^i{\Phi _3}}}{{i!}}}$,
${\chi _4} =  {{\bar F}_{{G_B}}}\left( {{\varepsilon _5}} \right) - {A_1}{\Phi _4}$,
${\Phi _1} = {g_1}\left( {{\varepsilon _1},{\varepsilon _2},{\varepsilon _1},{\varepsilon _0}} \right)$,
${\Phi _2} = \frac{{\Upsilon \left( {i + m,{A_2}{\varepsilon _0}} \right)}}{{A_2^{i + m}}} - \frac{{\Upsilon \left( {i + m,{A_2}{\varepsilon _1}} \right)}}{{A_2^{i + m}}}$,
${\Phi _3} = \frac{1}{{A_2^{i + m}}}\left( {\Upsilon \left( {i + m,{A_2}{\varepsilon _5}} \right) - \Upsilon \left( {i + m,{A_2}{\varepsilon _1}} \right)} \right)$,
$\Upsilon \left( { \cdot , \cdot } \right)$ is lower incomplete Gamma function as defined by  \cite[(8.350.2)]{GradshteynBook},
${\Phi _4} = {g_2}\left( {{\varepsilon _3},{\varepsilon _4},{\varepsilon _5}} \right)$,
${g_1}\left( {a,b,s,t} \right) = \frac{\pi }{N}\sum\limits_{i = 0}^{m - 1} {\frac{{{{\left( {{\lambda _F}b} \right)}^i}}}{{i!}}\sum\limits_{n = 1}^N {\frac{{{{\left( {{\mu _n}\left( {s,t} \right)} \right)}^{m + i - 1}}}}{{{{\left( {{\mu _n}\left( {s,t} \right) - a} \right)}^i}}}\sqrt {\left( {{\mu _n}\left( {s,t} \right) - s} \right)\left( {t - {\mu _n}\left( {s,t} \right)} \right)} } }$\\${e^{ - {\lambda _B}{\mu _n}\left( {s,t} \right) - {\lambda _F}b\frac{{{\mu _n}\left( {s,t} \right)}}{{{\mu _n}\left( {s,t} \right) - a}}}}$,
${\varepsilon _2} = \frac{{{\Theta _B}\left( {{\Theta _{th}} - 1} \right)}}{\rho }$,
${\varepsilon _0} = {\varepsilon _1} + {\varepsilon _2}$,
${g_2}\left( {a,b,c} \right) = \sum\limits_{i = 0}^{m - 1} {\frac{{{{\left( {{\lambda _F}b} \right)}^i}}}{{i!}}} \sum\limits_{n = 1}^N {\left( {\frac{{{\omega _n}\iota _n^{m + i - 1}}}{{{{\left( {{\iota _n} - a} \right)}^i}}}{e^{{\iota _n} - \left( {{\lambda _B}{\iota _n} + \frac{{{\lambda _F}b{\iota _n}}}{{{\iota _n} - a}}} \right)}}} \right.} \\
\left. { - \frac{{\pi {e^{ - \left( {{\lambda _B}{\mu _n}\left( {0,c} \right) + \frac{{{\lambda _F}b{\mu _n}\left( {0,c} \right)}}{{{\mu _n}\left( {0,c} \right) - a}}} \right)}}{{\left( {{\mu _n}\left( {0,c} \right)} \right)}^{m + i - 1}}\sqrt {{\mu _n}\left( {0,c} \right)\left( {c - {\mu _n}\left( {0,c} \right)} \right)} }}{{N{{\left( {{\mu _n}\left( {0,c} \right) - a} \right)}^i}}}} \right)$,
${\varepsilon _3} = \frac{{{\Theta _B} - 1}}{\rho }\frac{{{\Theta _{th}}}}{{{\Theta _{th}} - \left( {{\Theta _{th}} - 1} \right){\Theta _B}}}$,
${\varepsilon _4} = \frac{{{\Theta _B}\left( {{\Theta _{th}} - 1} \right)}}{{\left( {{\Theta _B} - {\Theta _{th}}\left( {{\Theta _B} - 1} \right)} \right)\rho }}$,
${\varepsilon _5} = \frac{{2{\Theta _B}{\Theta _{th}} - {\Theta _{th}} - {\Theta _B}}}{{\rho \left( {{\Theta _{th}} + {\Theta _B} - {\Theta _{th}}{\Theta _B}} \right)}}$,
${{\bar F}_{{G_X}}}\left( x \right) = 1 - {F_{{G_X}}}\left( x \right)$,
${A_1} = \frac{{\lambda _{{B}}^m}}{{\Gamma \left( m \right)}}$,
${A_2} = {\lambda _{{B}}} + {\lambda _{{F}}}$,
${\mu _n}\left( {s,t} \right) = \frac{{t + s}}{2} + \frac{{\left( {t - s} \right){\tau _n}}}{2}$,
${\tau _n} = \cos \frac{{\left( {2n - 1} \right)\pi }}{{2N}}$,
$N$ is the summation terms, which reflects accuracy vs. complexity,
$\iota _n$ is the $n$th zeros of Laguerre polynomials,
and
${w_n}$ is the Gaussian weight, which are given in Table (25.9) of \cite{Abramowitz1972Book}.

	\textbf{\textit{Proof}}:		Please refer to Appendix \ref{prooftheorem1}.
	\end{theorem}

	Eq. (\ref{theorem1eq}) provides an exact relationship between the OP of $D_F$ and all the system parameters. 
	Some interesting insights can be obtained. 
	Firstly, the requirement of $D_B$ directly determines whether the resources can be shared with $D_F$. 
	A higher RRT denotes a lower probability of resource sharing. 
	Secondly, the location of $S$ is one of the important factors affecting DF performance because that affects the decoding order on $D_F$. 
	Thus, the OP of $D_F$ can be minimized by jointly optimizing the location of $S$ and other systems parameters, which will be part of future works. 
	Last but not least, the relationship between the RRT for $D_B$ and $D_F$ users under the condition in which the resources can be shared also significantly impacts the OP of the $D_F$.
	The analytical expressions presented in Theorem \ref{theorem1} are complicated because many coupled factors affect the outage performance of the GF user, specifically, the decoding order, the RRT of $D_B$, the RRT of $D_F$, and the relationship between $D_B$'s and $D_F$'s channel condition. 
	
	The asymptotic OP is derived in the following corollary to obtain more insights.

	\newcounter{TempEqCnt0}
	\setcounter{TempEqCnt0}{\value{equation}}
	\setcounter{equation}{18} 
	\begin{figure*}[ht]
\begin{equation}
	\begin{aligned}
		P_{{\rm{out}}}^{{\rm{DPA}}} &= \underbrace {\Pr \left\{ {{G_B} < {\varepsilon _1}} \right\}}_{ \triangleq {T_0}} + \underbrace {\Pr \left\{ {{G_B} > {\varepsilon _1},{G_F} > {G_B},R_F^1 < {R_{{\rm{th}}}^F}} \right\}}_{ \triangleq {T_{11}}} \\
		&+ \underbrace {\Pr \left\{ {{G_B} > {\varepsilon _1},{G_F} < \frac{{{\Theta _B}{G_B}}}{{\rho {G_B} + 1}},R_F^2 < {R_{{\rm{th}}}^F}} \right\}}_{ \triangleq {T_2}} + \underbrace {\Pr \left\{ {{G_B} > {\varepsilon _1},\frac{{{\Theta _B}{G_B}}}{{\rho {G_B} + 1}} < {G_F} < {G_B},R_F^3 < {R_{{\rm{th}}}^F}} \right\}}_{ \triangleq {T_3}}
		\label{opDPA}
	\end{aligned}
\end{equation}
\hrulefill
\end{figure*}
\setcounter{equation}{\value{TempEqCnt0}}
\newcounter{TempEqCnt1}
\setcounter{TempEqCnt1}{\value{equation}}
\setcounter{equation}{20} 
\begin{figure*}[ht]
\begin{equation}
	P_{{\rm{out}}}^{{\rm{DPA}},\infty} = \left\{ {\begin{array}{*{20}{c}}
			{T_0^\infty  + T_{11}^\infty  + T_3^\infty ,}&{{\Theta _{th}} < \frac{{{\Theta _B}}}{{{\Theta _B} - 1}}} \\
			{T_0^\infty  + T_{11}^\infty  + T_{2b}^\infty  + T_3^\infty ,}&{{\Theta _{th}} > \frac{{{\Theta _B}}}{{{\Theta _B} - 1}}}
	\end{array}} \right.
	\label{corollary21}
\end{equation}
\hrulefill
\end{figure*}
\setcounter{equation}{\value{TempEqCnt1}}

\begin{corollary}
\label{corollary1}
When $\rho  \to \infty $, the asymptotic OP is expressed as
\begin{equation}
	\begin{aligned}
		P_{{\rm{out}}}^{{\rm{FPA,}}\infty } = \left\{ {\begin{array}{*{20}{c}}
				{T_0^\infty  + T_{11}^\infty  + T_{12a}^\infty ,}&{{\Theta _{th}} < \frac{{{\Theta _B}}}{{{\Theta _B} - 1}}}\\
				{T_0^\infty  + T_{11}^\infty  + T_{12b}^\infty ,}&{{\Theta _{th}} > \frac{{{\Theta _B}}}{{{\Theta _B} - 1}}}
		\end{array}} \right.
	\end{aligned}
	\label{approxi1}
\end{equation}
where $T_0^\infty  = \frac{{{{\left( {{\lambda _B}{\varepsilon _1}} \right)}^m}}}{{m!}}$,
$T_{11}^\infty  \approx \frac{{\lambda _B^m\left( {\varepsilon _0^m - {\varepsilon _1}^m} \right)}}{{m!}}\left( {\frac{{{{\left( {{\lambda _F}{\varepsilon _2}} \right)}^m}}}{{m!}} - 1} \right) + \frac{{\lambda _B^m}}{{\Gamma \left( m \right)}}\sum\limits_{i = 0}^{m - 1} {\frac{{\lambda _F^i\left( {\varepsilon _0^{i + m} - {\varepsilon _1}^{i + m}} \right)}}{{i!\left( {i + m} \right)}}}$,
$T_{12a}^\infty  = \chi _3^\infty  + \chi _4^\infty$,
$\chi _3^\infty  \approx \frac{{\lambda _B^m\left( {\varepsilon _5^m - \varepsilon _1^m} \right)}}{{m!}} - {A_1}\sum\limits_{i = 0}^{m - 1} {\frac{{\lambda _F^i\left( {\varepsilon _5^{i + m} - \varepsilon _1^{i + m}} \right)}}{{i!\left( {i + m} \right)}}}$,
$\chi _4^\infty  \approx \frac{{{{\left( {{\lambda _F}{\varepsilon _4}} \right)}^m}}}{{m!}}\left( {1 - \frac{{{{\left( {{\lambda _B}{\varepsilon _5}} \right)}^m}}}{{m!}}} \right)$,
and
$T_{12b}^\infty  \approx 1 - \frac{{{{\left( {{\lambda _B}{\varepsilon _1}} \right)}^m}}}{{m!}} - {A_1}\sum\limits_{i = 0}^{m - 1} {\frac{{\lambda _F^i}}{{i!{A_2}^{i + m}}}\left( {\Gamma\left( {i + m} \right) - \frac{{{{\left( {{A_2}{\varepsilon _1}} \right)}^{i + m}}}}{{i + m}}} \right)}$.

\end{corollary}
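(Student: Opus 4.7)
My plan is to exploit the fact that every threshold appearing in Theorem \ref{theorem1}, namely $\varepsilon_1, \varepsilon_2, \varepsilon_3, \varepsilon_4, \varepsilon_5$ and $\varepsilon_0 = \varepsilon_1 + \varepsilon_2$, is proportional to $1/\rho$. Thus as $\rho \to \infty$ each CDF, incomplete gamma, and quadrature evaluated at one of these thresholds admits a clean leading-order expansion. The two elementary tools I would use repeatedly are: (i) the CDF expansion $1 - e^{-x}\sum_{i=0}^{m-1} x^i/i! = x^m/m! + O(x^{m+1})$, which gives $F_{G_X}(x) \approx (\lambda_X x)^m/m!$ as $x \to 0$; and (ii) the incomplete-gamma expansion $\Upsilon(a,x) = x^a/a + O(x^{a+1})$ as $x \to 0$.

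First I would dispatch the easy pieces. From (i), $T_0 = F_{G_B}(\varepsilon_1)$ immediately gives $T_0^\infty = (\lambda_B\varepsilon_1)^m/m!$. For $T_{12b}$, I would use $\bar F_{G_B}(\varepsilon_1) \approx 1 - (\lambda_B\varepsilon_1)^m/m!$ together with the identity $\Gamma(i+m, A_2\varepsilon_1) = \Gamma(i+m) - \Upsilon(i+m, A_2\varepsilon_1)$, to which (ii) applies; substituting and keeping the leading term in each $i$-summand produces the claimed $T_{12b}^\infty$. For $\chi_3$ the same recipe works: (i) handles the difference $F_{G_B}(\varepsilon_5) - F_{G_B}(\varepsilon_1)$ and (ii) handles each summand of $\Phi_3 = [\Upsilon(i+m, A_2\varepsilon_5) - \Upsilon(i+m, A_2\varepsilon_1)]/A_2^{i+m}$, yielding $\chi_3^\infty$ directly.

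The delicate pieces are $\Phi_1$ inside $T_{11}$ and $\Phi_4 = g_2(\varepsilon_3,\varepsilon_4,\varepsilon_5)$ inside $\chi_4$, because these are Gauss-Chebyshev / Gauss-Laguerre quadrature sums rather than closed-form special functions. Taking $\rho \to \infty$ naively inside a finite quadrature entangles the discretization error with the asymptotic, so I would back up one step in the proof of Theorem \ref{theorem1}, recognize that each quadrature discretizes an integral of the generic form $\int f_{G_B}(x)\,\bar F_{G_F}(h(x))\,dx$ for an appropriate rational $h$, and perform the asymptotics at the integral level. In the limit, the bounded factor $\bar F_{G_F}(h(x))$ is approximately constant on the bulk of the shrinking integration window, so the integral collapses to a product: $A_1\Phi_1 \approx [F_{G_B}(\varepsilon_0) - F_{G_B}(\varepsilon_1)]\,\bar F_{G_F}(\varepsilon_2) \approx \frac{\lambda_B^m(\varepsilon_0^m - \varepsilon_1^m)}{m!}\bigl(1 - (\lambda_F\varepsilon_2)^m/m!\bigr)$. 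Combining this with the $\Phi_2$ contribution (which appears with opposite sign in $T_{11} = \chi_1 - \chi_2$) and applying (ii) termwise gives exactly the two-piece $T_{11}^\infty$. An analogous reduction of $g_2$ yields $\chi_4^\infty \approx \bar F_{G_B}(\varepsilon_5)\,F_{G_F}(\varepsilon_4) \approx \bigl(1 - (\lambda_B\varepsilon_5)^m/m!\bigr)(\lambda_F\varepsilon_4)^m/m!$, so $T_{12a}^\infty = \chi_3^\infty + \chi_4^\infty$ as stated.

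The main obstacle is the quadrature-to-integral passage, because the rational factor $h(x) = \varepsilon_2 x/(x-\varepsilon_1)$ inside $\Phi_1$ diverges at the lower endpoint $x = \varepsilon_1$. The divergence is tamed by $\bar F_{G_F}(h(x)) \to 0$ there, but one must still verify that the transition region $x - \varepsilon_1 = O(\varepsilon_1\varepsilon_2)$ contributes only at next-to-leading order in $1/\rho$; this is a short but careful estimate showing that the product approximation above absorbs the pointwise variation of $\bar F_{G_F}(h(x))$ up to $o(\varepsilon_1^m)$. With that in hand, collecting $T_0^\infty$, $T_{11}^\infty$, and the appropriate branch ($T_{12a}^\infty$ versus $T_{12b}^\infty$) according to the sign of $\Theta_{th} - \Theta_B/(\Theta_B - 1)$ produces the claimed two-case asymptotic.
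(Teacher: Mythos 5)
Your proposal is correct and follows essentially the same route as the paper's Appendix B: the paper likewise obtains $T_0^\infty$, $\chi_3^\infty$, and $T_{12b}^\infty$ from the small-argument expansions $F_{G_X}(x)\to(\lambda_X x)^m/m!$ and $\Upsilon(n,x)\approx x^n/n$, and handles $T_{11}$ and $\chi_4$ by reverting to the probability/integral level and replacing ${\varepsilon_2 G_B}/({G_B-\varepsilon_1})$ and ${\varepsilon_4 G_B}/({G_B-\varepsilon_3})$ by their limits $\varepsilon_2$ and $\varepsilon_4$, exactly your product approximation. Your explicit attention to the endpoint singularity of $h(x)$ near $x=\varepsilon_1$ is a point the paper passes over silently, but it does not change the stated result.
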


\textbf{\textit{Proof}}:	Please refer to Appendix \ref{proofCorollary1}.

Due to ${\varepsilon _1} \to 0$, ${\varepsilon _2} \to 0$,  ${\varepsilon _4} \to 0$, ${\varepsilon _5} \to 0$ when $\rho \to \infty$, we have $T_0^\infty  \to 0$, $T_{11}^\infty  \to 0$, $T_{12a}^\infty  \to 0$, $T_{12b}^\infty  \approx 1 - {A_1}\sum\limits_{i = 0}^{m - 1} {\frac{{\lambda _F^i\Gamma\left( {i + m} \right)}}{{i!{A_2}^{i + m}}}} $, which is a constant  independent of $\rho$.

\begin{remark}
{According to} the results presented in Corollary \ref{corollary1}, one can realize there is an OP floor when ${\Theta _{th}} > 1 + \frac{1}{{{\Theta _B} - 1}}$.
\end{remark}

{Utilizing} ${G_{\rm{d}}} =  - \mathop {\lim }\limits_{\rho  \to \infty } \frac{{\log P_{\rm{out}}^\infty \left( \rho  \right)}}{{\log \rho }}$, the diversity order with the SGF scheme is obtained as
\begin{equation}
G_{\rm{d}}^{{\rm{FPA}}} = \left\{ {\begin{array}{*{20}{c}}
		{m,}&{{\Theta _{th}} < \frac{{{\Theta _B}}}{{{\Theta _B} - 1}}}\\
		{0,}&{{\Theta _{th}} > \frac{{{\Theta _B}}}{{{\Theta _B} - 1}}}
\end{array}} \right.
\label{GDFPA}
\end{equation}

\section{Outage performance analysis with dynamic power allocation}	
\label{sec:DPA}

In this section, a DPA scheme is proposed to avoid the OP floor and the analytical expression for the OP with the DPA scheme is derived.

\subsection{Proposed dynamic power allocation scheme}

In the previous analysis, $D_B$ is always allocated to a fixed power $\omega $ that just gives priority to meet its QoS requirement, while the other power is allocated to $D_F$.
One can observe from Corollary \ref{corollary1} that there is OP floor when ${\Theta _{th}} > 1 + \frac{1}{{{\Theta _B} - 1}}$.
Recalling when $G_F > G_B$, $D_B$'s signal can be decoded firstly at $D_F$ and the achievable rate of $D_F$ is expressed as $R_F^1 = {\log _2}\left( {1 + \rho {\bar \omega }{G_F}} \right)$.
If $G_F < G_B$, $D_F$ has not enough capacity to decode $D_B$'s signal at the first stage of SIC, the achievable rate of $D_F$ is expressed as $R_F^2 = {\log _2}\left( {1 + \frac{{\rho {\bar \omega }{G_F}}}{{1 + \rho {\omega }{G_F}}}} \right)$.
It must be noted that in the scenarios wherein $G_F < G_B$, $U$ can increase $\omega $ to ${\omega _2}$ to make sure that $D_F$ can decode $D_B$'s signal at the first stage of SIC, then the achievable rate will be changed from $R_F^2 = {\log _2}\left( {1 + \frac{{\rho {\bar \omega }{G_F}}}{{1 + \rho {\omega }{G_F}}}} \right)$ to $R_F^3 ={\log _2}\left( {1 + \rho {{\bar \omega }_2}{G_F}} \right)$, where ${{\bar \omega }_2} = 1 - {\omega _2}$ and ${\omega _2} = 1 - \frac{{\rho {G_F} - \left( {{\Theta _B} - 1} \right)}}{{\rho {\Theta _B}{G_F}}}$.
The goal of the DPA scheme is to maximize $D_F$'s achievable rate.
Hence, the power allocation coefficient must be chosen according to the relationship between $R_F^2$ and $R_F^3$.
Specifically, if $R_F^2 > R_F^3$, the power allocation coefficient for $D_B$ is $\omega $ otherwise ${\omega _2}$.
Due to $R_F^2 < R_F^3  \Leftrightarrow {\frac{{{\Theta _B}{G_B}}}{{\rho {G_B} + 1}} < {G_F} < {G_B}}$, 
the achievable rate of $D_F$ with the DPA scheme is expressed as
\begin{equation}
{R_F} = \left\{ {\begin{array}{*{20}{c}}
		{R_F^1,}&{{G_F} > {G_B}} \\
		{R_F^2,}&{{G_F} < \frac{{{\Theta _B}{G_B}}}{{\rho {G_B} + 1}}} \\
		{R_F^3,}&{\frac{{{\Theta _B}{G_B}}}{{\rho {G_B} + 1}} < {G_F} < {G_B}}
		\label{rateDPA}
\end{array}} \right.
\end{equation}

\subsection{Outage performance analysis with dynamic power allocation scheme}
{On the ground of }(\ref{rateDPA}), the OP with the DPA scheme is expressed as (\ref{opDPA}), shown at the top of the this page,
\setcounter{equation}{19}
where $T_0$ and $T_{11}$ are given in Theorem \ref{theorem1}, $T_2$ denotes $D_F$'s signal that is decoded at the first stage when power allocation coefficient is ${\omega }$,
and $T_3$ denotes $D_F$'s signal that is decoded at the second stage when power allocation coefficient is ${\omega _2}$.

The following theorem provides the analytical expression for the OP of $D_F$  with the DPA scheme.
\begin{theorem}
The OP of $D_F$ with the DPA scheme, $P_{{\text{out}}}^{{\text{dp}}}$, is expressed as
\begin{equation}
	P_{{\rm{out}}}^{{\rm{DPA}}} = \left\{ {\begin{array}{*{20}{c}}
			{{T_0} + {T_{11}} + T_{2a}^a + {T_3},}&{{\Theta _B} > \frac{1}{{{\Theta _{th}} - 1}}} \\
			{{T_0} + {T_{11}} + T_{2a}^b + {T_3},}&{{\Theta _B} < \frac{1}{{{\Theta _{th}} - 1}}}
	\end{array}} \right.
	\label{theorem2eq}
\end{equation}
where
$T_{2a}^a = {{\bar F}_{{G_B}}}\left( {{\varepsilon _1}} \right) - {A_1}{\Phi _5}$,
$T_{2a}^b = {{\bar F}_{{G_B}}}\left( {{\varepsilon _1}} \right) - {A_1}{g_1}\left( { - \frac{1}{\rho },\frac{{{\Theta _B}}}{\rho },{\varepsilon _1},{\varepsilon _3}} \right) - {A_1}{g_1}\left( {{\varepsilon _3},{\varepsilon _4},{\varepsilon _3},{\varepsilon _6}} \right) - {A_1}{g_2}\left( { - \frac{1}{\rho },\frac{{{\Theta _B}}}{\rho },{\varepsilon _6}} \right)$,
${{\text{T}}_3} = {A_1}{\Phi _5} - {{\bar F}_{{G_F}}}\left( {{\varepsilon _0}} \right){{\bar F}_{{G_B}}}\left( {{\varepsilon _0}} \right) - {A_1}\sum\limits_{i = 0}^{m - 1} {\frac{{{{\left( {{\lambda _F}} \right)}^i}{\Phi _2}}}{{i!}}}$,
${\Phi _5} = {g_2}\left( { - \frac{1}{\rho },\frac{{{\Theta _B}}}{\rho },{\varepsilon _1}} \right)$,
and
${\Phi _6} = \frac{{\Upsilon \left( {i + m,{A_2}{\varepsilon _0}} \right) - \Upsilon \left( {i + m,{A_2}{\varepsilon _1}} \right)}}{{A_2^{i + m}}}$.

\textbf{\textit{Proof}}:	Please refer to Appendix \ref{prooftheorem2}.

\end{theorem}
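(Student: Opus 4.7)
The plan is to start from the decomposition (\ref{opDPA}) and evaluate each of its four summands separately, exploiting the independence of $G_B$ and $G_F$ and the CDF/PDF in (\ref{G_X_cdf})--(\ref{G_X_pdf}). Because $T_0$ and $T_{11}$ are identical to quantities already computed in the proof of Theorem \ref{theorem1}, the remaining work concentrates on $T_2$ and $T_3$. Integrals in which $G_B$ appears simultaneously as a limit and inside the rational exponent $e^{-\lambda_F b\,G_B/(G_B-a)}$ will be cast into the $g_1$ and $g_2$ templates by Chebyshev--Gauss quadrature on a bounded interval or by combining a Laguerre quadrature on $[0,\infty)$ with a Chebyshev correction on a finite sub-interval.

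For $T_2$ I would first substitute the FPA value $\omega=(\rho G_B+1)(\Theta_B-1)/(\rho G_B\Theta_B)$ into $R_F^2<R_{\rm th}^F$ to obtain a closed-form upper bound $G_F<\psi(G_B)$. A direct comparison shows that $\psi(G_B)$ is the looser of the two ceilings (the other being the structural one $\Theta_BG_B/(\rho G_B+1)$ already present in $T_2$) precisely when $\Theta_B>1/(\Theta_{\rm th}-1)$; in that regime the rate constraint is redundant, $T_2$ reduces to $\Pr\{G_B>\varepsilon_1,\,G_F<\Theta_BG_B/(\rho G_B+1)\}$, and integrating out $G_F$ via (\ref{G_X_cdf}) yields $T_{2a}^a=\bar F_{G_B}(\varepsilon_1)-A_1\Phi_5$ after recognising the remaining integral as $g_2(-1/\rho,\Theta_B/\rho,\varepsilon_1)$. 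In the opposite regime $\Theta_B<1/(\Theta_{\rm th}-1)$, the active upper bound on $G_F$ switches between $\psi(G_B)$ and the structural ceiling at $G_B=\varepsilon_3$ and $G_B=\varepsilon_6$, partitioning $[\varepsilon_1,\infty)$ into three sub-intervals whose contributions furnish the three pieces $g_1(-1/\rho,\Theta_B/\rho,\varepsilon_1,\varepsilon_3)$, $g_1(\varepsilon_3,\varepsilon_4,\varepsilon_3,\varepsilon_6)$ and $g_2(-1/\rho,\Theta_B/\rho,\varepsilon_6)$ appearing in $T_{2a}^b$.

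For $T_3$ the manipulation is cleaner. Inserting $\omega_2=1-(\rho G_F-\Theta_B+1)/(\rho\Theta_B G_F)$ into $R_F^3=\log_2(1+\rho\bar\omega_2 G_F)$ and simplifying collapses the rate expression to $R_F^3=\log_2((\rho G_F+1)/\Theta_B)$, so that $R_F^3<R_{\rm th}^F$ is equivalent to $G_F<\varepsilon_0$. Thus $T_3$ is the probability of $\{G_B>\varepsilon_1,\,\Theta_BG_B/(\rho G_B+1)<G_F<\min(G_B,\varepsilon_0)\}$, which I split at $G_B=\varepsilon_0$. After integrating out $G_F$ with (\ref{G_X_cdf}) and recognising $\int_{\varepsilon_1}^{\infty} F_{G_F}(\Theta_B x/(\rho x+1))f_{G_B}(x)\,dx$ as $A_1\Phi_5$ via the same $g_2$ quadrature, the three stated pieces $A_1\Phi_5$, $-\bar F_{G_F}(\varepsilon_0)\bar F_{G_B}(\varepsilon_0)$ and $-A_1\sum_{i=0}^{m-1}\lambda_F^i\Phi_2/i!$ fall out, where the $\Upsilon$-terms in $\Phi_2$ (equivalently $\Phi_6$) come from the elementary integral of $x^{i+m-1}e^{-(\lambda_B+\lambda_F)x}$ over $[\varepsilon_1,\varepsilon_0]$.

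The main obstacle is the region bookkeeping for $T_2$: three competing upper bounds on $G_F$ (the rate-induced $\psi(G_B)$, the structural ceiling $\Theta_BG_B/(\rho G_B+1)$ and the trivial lower limit zero) must be ordered as functions of $G_B$, and this ordering flips at the break-points $\varepsilon_3$, $\varepsilon_4$ and $\varepsilon_6$, which is what ultimately produces the case split on $\Theta_B$ versus $1/(\Theta_{\rm th}-1)$. Once the active sub-region has been identified on each sub-interval, every remaining double integral either reduces to an incomplete-Gamma form or matches the $g_1/g_2$ template established in Theorem \ref{theorem1}, so no genuinely new technique is required beyond careful case analysis.
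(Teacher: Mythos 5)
Your overall route is the same as the paper's: reuse $T_0$ and $T_{11}$ from Theorem~\ref{theorem1}, split $T_2$ according to whether the rate-induced ceiling on $G_F$ is redundant, and reduce $T_3$ to the event $G_F<\varepsilon_0$ before splitting at $G_B=\varepsilon_0$. Your treatment of $T_{2a}^a$ and of $T_3$ (including the observation that $R_F^3=\log_2((\rho G_F+1)/\Theta_B)$, which makes the threshold $G_F<\varepsilon_0$ transparent) matches Appendix~\ref{prooftheorem2} essentially step for step.

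The gap is in the region bookkeeping for $T_{2a}^b$, which you yourself flag as the main obstacle. On $G_B>\varepsilon_3$ the two competing ceilings are the structural one $\Theta_B G_B/(\rho G_B+1)$, which is \emph{increasing} in $G_B$ and bounded by $\Theta_B/\rho$, and the rate-induced one $\varepsilon_4 G_B/(G_B-\varepsilon_3)$, which is \emph{decreasing} and diverges as $G_B\to\varepsilon_3^{+}$. Hence just above $\varepsilon_3$ the rate ceiling cannot be the binding (smaller) constraint; the two curves cross exactly once, at $\varepsilon_6$, and the minimum is the structural ceiling on $(\varepsilon_3,\varepsilon_6)$ and the rate ceiling on $(\varepsilon_6,\infty)$. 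Your proposed partition assigns the rate ceiling to the middle interval and the structural ceiling to the tail, i.e.\ the opposite ordering, so the pieces $g_1(\varepsilon_3,\varepsilon_4,\varepsilon_3,\varepsilon_6)$ and $g_2(-\tfrac{1}{\rho},\tfrac{\Theta_B}{\rho},\varepsilon_6)$ you write down do not correspond to the minimum on those sub-intervals. The paper's own derivation in (\ref{H23212}) carries out the correct ordering and obtains only two pieces, ${{\bar F}_{{G_B}}}(\varepsilon_1)-A_1 g_1(-\tfrac{1}{\rho},\tfrac{\Theta_B}{\rho},\varepsilon_1,\varepsilon_6)-A_1 g_2(\varepsilon_3,\varepsilon_4,\varepsilon_6)$, which is not algebraically equal to the three-piece expression in the theorem statement that you reverse-engineered. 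So while you reproduced the stated formula, the justification you give for it would fail at the point where you must verify which ceiling is active on $(\varepsilon_3,\varepsilon_6)$; a one-line monotonicity check of the two ceilings at $G_B=\varepsilon_3^{+}$ exposes the problem.
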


Eq. (\ref{theorem2eq}) provides an exact relationship between the OP of $D_F$ with the DPA scheme and the system parameters. 
In addition to the insights mentioned in Theorem 1, an interesting phenomenon can be found. 
Intuitively, increasing the transmit power of $D_B$ tends to stronger inter-user interference on $D_F$, which deteriorates the performance of $D_F$.
However, in the DPA scheme, under some conditions, appropriately increasing the transmit power of $D_B$ ensures the signal of $D_B$ can be decoded successfully on $D_F$. The achievable rate of $D_F$ can be enhanced through the SIC technique; thus the performance is improved. 
Similar to \cite{SunY2021TVT} and \cite{LeiH2022TWC}, the collaboration among the GB and GF users not only improves resource utilization, but also enhances the GF users' performance while not affecting the GB's QoS.

To obtain more insights, we derive the analytical expressions for the asymptotic OP of $D_F$ with the DPA scheme.
\begin{corollary}
\label{corollary2}
When $\rho  \to \infty $, the asymptotic OP of $D_F$ is expressed as (\ref{corollary21}), shown at the top of this page,
\setcounter{equation}{21} 	
where
$T_{2b}^\infty  = \frac{1}{{m!}}{\left( {\frac{{{\lambda _F}{\Theta _B}}}{\rho }} \right)^m}\left( {1 - \frac{{{{\left( {{\varepsilon _1}{\lambda _B}} \right)}^m}}}{{m!}}} \right)$,
$T_3^\infty  \approx \frac{{\lambda _B^m\left( {\varepsilon _0^m - \varepsilon _1^m} \right)}}{{m!}} + \frac{{\lambda _F^m}}{{m!}}\left( {\varepsilon _0^m - \frac{{\Theta _B^m}}{{{\rho ^m}}}} \right) - {A_1}\sum\limits_{i = 0}^{m - 1} {\frac{{\lambda _F^i\Phi _2^\infty }}{{i!}}}$,
and
$\Phi _2^\infty  = \frac{{\varepsilon _0^{i + m} - \varepsilon _1^{i + m}}}{{i + m}}$.

\textbf{\textit{Proof}}:		Please refer to Appendix \ref{proofCorollary2}.

\end{corollary}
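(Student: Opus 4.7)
The plan is to derive the asymptotic expression by starting from the exact OP in Theorem 2 and expanding each constituent term as $\rho\to\infty$ using standard high-SNR approximations. The two main tools will be (i) the small-argument expansion $1-e^{-\lambda x}\sum_{i=0}^{m-1}\frac{(\lambda x)^i}{i!}\approx \frac{(\lambda x)^m}{m!}$, which follows from the leading Taylor term of the complementary CDF in (\ref{G_X_cdf}), and (ii) the small-argument behaviour of the lower incomplete Gamma function $\Upsilon(n,x)\approx \frac{x^n}{n}$ used in the $\Phi_i$ expressions. Since the quantities $\varepsilon_1,\varepsilon_2,\varepsilon_4,\varepsilon_5\to 0$ as $\rho\to\infty$, these expansions are directly applicable inside every term except those that correspond to a non-vanishing integration region.

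First I would reuse what is already proved: $T_0^\infty$ and $T_{11}^\infty$ are unchanged from Corollary \ref{corollary1}, since $T_0$ and $T_{11}$ have the same form in the FPA and DPA schemes. Next I would asymptotically evaluate $T_3$. Because $T_3$ is the probability that $G_B>\varepsilon_1$, $\frac{\Theta_B G_B}{\rho G_B+1}<G_F<G_B$ and $R_F^3<R_{\rm th}^F$, conditioning on $G_B$ and expanding $\bar F_{G_F}$ and $F_{G_F}$ for small arguments produces three contributions: the bracket $\varepsilon_0^m-\varepsilon_1^m$ coming from the $G_B$-slab integration, a $\frac{\Theta_B^m}{\rho^m}$ correction coming from the lower endpoint $\frac{\Theta_B G_B}{\rho G_B+1}\to\frac{\Theta_B}{\rho}$, and the $\Phi_2^\infty$ remainder obtained by substituting $\Upsilon(i+m,A_2\varepsilon)\approx \frac{(A_2\varepsilon)^{i+m}}{i+m}$ in the exact $\Phi_2$. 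Collecting them gives the stated $T_3^\infty$.

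The two-case split will arise from $T_2$. Recall that $T_{2a}^a$ applies when $\Theta_B>\frac{1}{\Theta_{th}-1}$ (equivalently $\Theta_{th}<\frac{\Theta_B}{\Theta_B-1}$): the feasibility region for $R_F^2<R_{\rm th}^F$ forces $G_F$ into an interval that shrinks as $\rho\to\infty$, so the contribution vanishes and no new asymptotic term appears. In the complementary case, the region does not collapse; the upper bound $G_F<\frac{\Theta_B G_B}{\rho G_B+1}\to \frac{\Theta_B}{\rho}$ stays of order $1/\rho$ while the constraint on $G_B$ survives, giving the non-vanishing floor contribution $T_{2b}^\infty = \frac{1}{m!}\bigl(\frac{\lambda_F\Theta_B}{\rho}\bigr)^m\bigl(1-\frac{(\varepsilon_1\lambda_B)^m}{m!}\bigr)$. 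I would verify this by taking $T_{2a}^b$ from Theorem 2 and expanding each $g_1$, $g_2$ piece, noting that terms of higher order in $1/\rho$ are discarded.

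The main obstacle will be the $T_3$ step, because the integration region depends on $G_B$ in a non-linear way through $\frac{\Theta_B G_B}{\rho G_B+1}$, and the power coefficient $\omega_2$ is itself a function of $G_F$, so one must carefully keep only the dominant $O(\rho^{-m})$ order while showing that the $g_1$-type Chebyshev--Gauss pieces collapse to an elementary expression. A secondary subtlety is confirming that the case boundary $\Theta_{th}=\frac{\Theta_B}{\Theta_B-1}$ in (\ref{corollary21}) coincides with the threshold at which $T_{2b}^\infty$ switches from $0$ to a $\rho$-independent floor, matching the diversity-order classification in (\ref{H232}) and demonstrating, by inspection of $T_{2b}^\infty$, that under DPA even the ``bad'' regime no longer produces a floor since $T_{2b}^\infty=O(\rho^{-m})\to 0$.
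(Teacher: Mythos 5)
Your overall strategy coincides with the paper's: carry over $T_0^\infty$ and $T_{11}^\infty$ from Corollary \ref{corollary1}, let $\varepsilon_i\to 0$ and $\frac{\Theta_B G_B}{\rho G_B+1}\to\frac{\Theta_B}{\rho}$, and expand via $F_{G_X}(x)\to\frac{(\lambda_X x)^m}{m!}$ and $\Upsilon(n,x)\approx x^n/n$. Your $T_3^\infty$ sketch (the $G_B$-slab term, the $\Theta_B^m/\rho^m$ endpoint correction, the $\Phi_2^\infty$ remainder) reproduces the computation in Appendix \ref{proofCorollary2}, and the "main obstacle" you flag there is actually benign: $R_F^3<R_{\rm th}^F$ reduces \emph{exactly} to $G_F<\varepsilon_0$, with no asymptotic argument needed to handle $\omega_2$.

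The genuine gap is in your treatment of $T_2$. First, the claimed equivalence "$\Theta_B>\frac{1}{\Theta_{th}-1}$ iff $\Theta_{th}<\frac{\Theta_B}{\Theta_B-1}$" is false: $\Theta_B>\frac{1}{\Theta_{th}-1}$ is equivalent to $\Theta_{th}>1+\frac{1}{\Theta_B}$, which is a strictly weaker threshold than $\frac{\Theta_B}{\Theta_B-1}=1+\frac{1}{\Theta_B-1}$. Theorem 2's split (between $T_{2a}^a$ and $T_{2a}^b$) and the corollary's split live on different thresholds. In particular, in the regime $\Theta_{th}>\frac{\Theta_B}{\Theta_B-1}$ where $T_{2b}^\infty$ appears, the relevant exact term is $T_{2b}=T_{2a}^a$ (as established in Appendix \ref{prooftheorem2}), not $T_{2a}^b$; indeed $\varepsilon_4<0$ there, so expanding the $g_1$, $g_2$ pieces of $T_{2a}^b$ as you propose would mean expanding an expression that is not even valid in that parameter range. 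Second, your stated mechanism does not actually separate the two cases: the admissible $G_F$-interval is $O(1/\rho)$ in \emph{both} regimes, so "the region does not collapse" is incorrect, and your own closing observation that $T_{2b}^\infty=O(\rho^{-m})$ contradicts calling it a "non-vanishing floor contribution." The decisive step in the paper's proof, absent from your proposal, is that $\omega\to1-\frac{1}{\Theta_B}$, hence $R_F^2\to\log_2\frac{\Theta_B}{\Theta_B-1}$ for any fixed $G_F>0$; therefore $\Pr\{R_F^2<R_{\rm th}^F\}$ tends to $1$ when $\Theta_{th}>\frac{\Theta_B}{\Theta_B-1}$ (leaving only the constraint $G_F<\frac{\Theta_B G_B}{\rho G_B+1}\to\frac{\Theta_B}{\rho}$, which factorizes into the stated $T_{2b}^\infty$) and to $0$ otherwise (whence $T_{2a}^\infty=0$). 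Without this observation you can justify neither the location of the case boundary nor the specific form of $T_{2b}^\infty$.
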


Because of $T_{0}^\infty  \propto {\rho ^{ - m}}$, $T_{11}^\infty  \propto {\rho ^{ - m}}$, $T_{2b}^\infty  \propto {\rho ^{ - m}}$, $T_{3}^\infty  \propto {\rho ^{ - m}}$, we obtain $P_{{\rm{out}}}^{{\rm{DPA}},\infty}  \propto {\rho ^{ - m}}$.

\begin{remark}
It can be realized from Corollary \ref{corollary2} that OP floors can be avoided when DPA scheme is adopted.
\end{remark}

\begin{figure}[t]
\centering
\subfigure[OP with FPA in suburban environments for varying $\rho$ and $y_U$.]{
	\label{fig02a}
	\includegraphics[width = 0.2285 \textwidth]{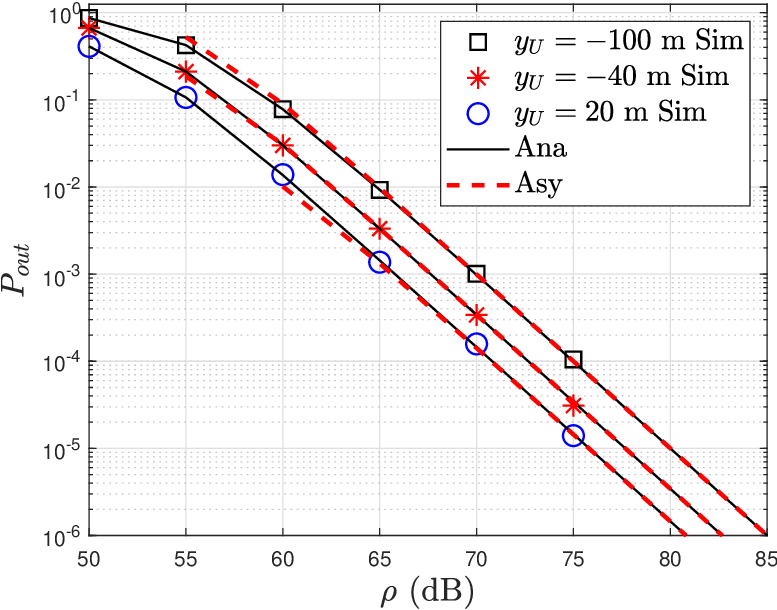}}
\subfigure[OP with DPA in suburban environments for varying $\rho$ and $y_U$.]{
	\label{fig02b}
	\includegraphics[width = 0.2285 \textwidth]{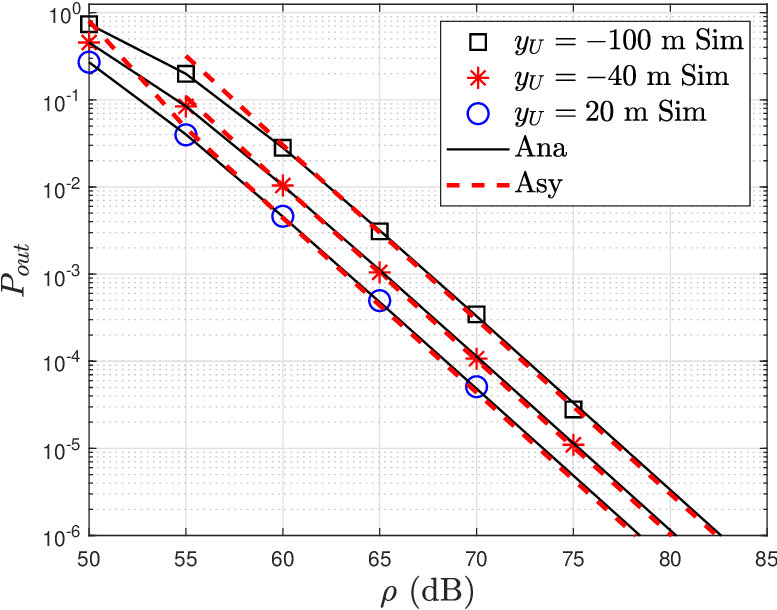}}
\subfigure[OP with FPA in urban environments for varying $\rho$ and $y_U$.]{
	\label{fig02c}
	\includegraphics[width = 0.2285 \textwidth]{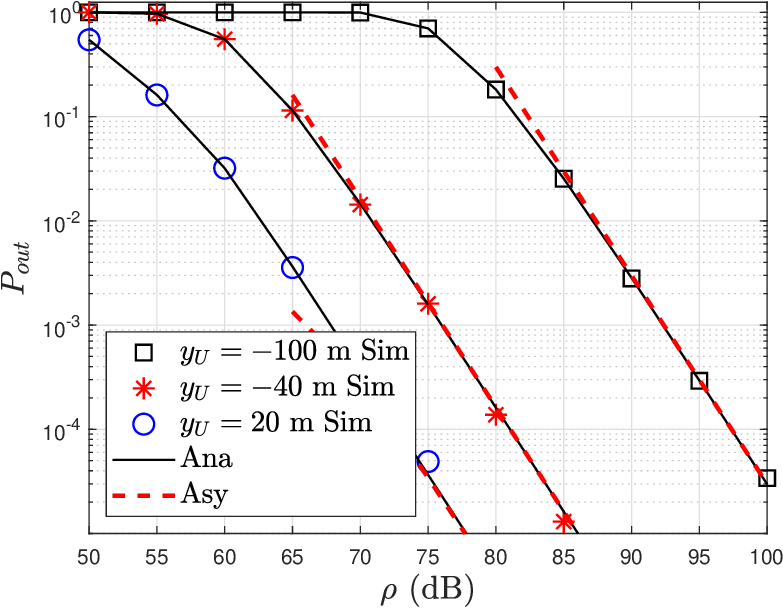}}
\subfigure[OP with DPA in urban environments for varying $\rho$ and $y_U$.]{
	\label{fig02d}
	\includegraphics[width = 0.2285 \textwidth]{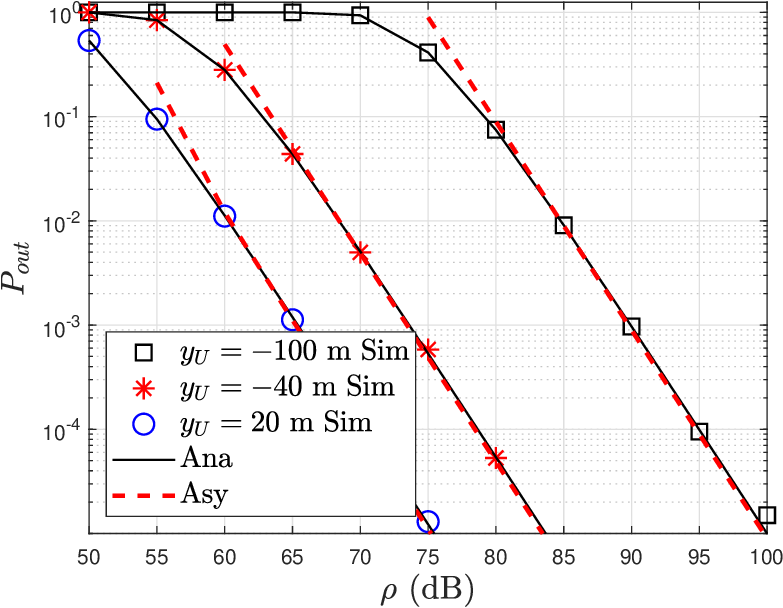}}
\caption{ {The effect of UAV's position and $\rho$ on the OP of $D_F$ with ${\Theta _{th}} < 1 + \frac{1}{{{\Theta _B} - 1}}$}.}
\label{fig02}
\end{figure}

Similar to (\ref{GDFPA}), the diversity order with DPA scheme is obtained as
\begin{equation}
G_{\rm{d}}^{{\rm{DPA}}} = m
\label{GDDPA}
\end{equation}

\section{Numerical results}
\label{sec:RESULTS}

In this section, Monte-Carlo simulations are presented to prove our analysis on the outage performance of the aerial SGF NOMA system by varying the parameters, such as transmit SNR and power allocation coefficient. The main parameters are set as $m = 2$, ${R_{{\rm{th}}}^B} = 0.2$ bps/Hz, ${R_{{\rm{th}}}^F} = 2$ bps/Hz, $\left( {{x_B},{y_B}} \right) = \left( {50, - 50} \right)$, $\left( {{x_F},{y_F}} \right) = \left( {50,50} \right)$,
and
$\left( {{x_U},{y_U},{z_U}} \right) = \left( {0,0,100} \right)$,
unless stated otherwise. In all the figures, `Sim', `Ana', and `Asy' denote the simulation, numerical, and asymptotic results, respectively.

Fig. \ref{fig02} presents the effect of UAV's position and $\rho$ on the OP of $D_F$ when ${\Theta _{th}} < 1 + \frac{1}{{{\Theta _B} - 1}}$.
It can be observed that OP is improved with the increase of transmission SNR. At the same time, by comparing Fig. \ref{fig02a} and Fig. \ref{fig02b}, it can also be observed that the curve of GF user's OP with DPA is always below the user OP curve with FPA, which demonstrates that the designed DPA scheme can effectively enhance the outage performance.
By comparing Fig. \ref{fig02a} and Fig. \ref{fig02c}, it can be observed that the outage performance of $D_F$ in suburban environments outperforms that in urban environments because there is a high LoS probability due to fewer shelters in suburban environments, which leads to a lower attenuation with minor average path loss. Thus, the degradation due to the variation of elevation angle is much severe in urban environment. 
The same conclusion can also be found by comparing Fig. \ref{fig02c} and Fig. \ref{fig02d}, and Fig. \ref{fig02b} and Fig. \ref{fig02d}.

\begin{figure}[t]
\centering
\subfigure[OP with FPA in suburban environments for varying $\rho$ and $y_U$.]{
\label{fig03a}
\includegraphics[width = 0.2285 \textwidth]{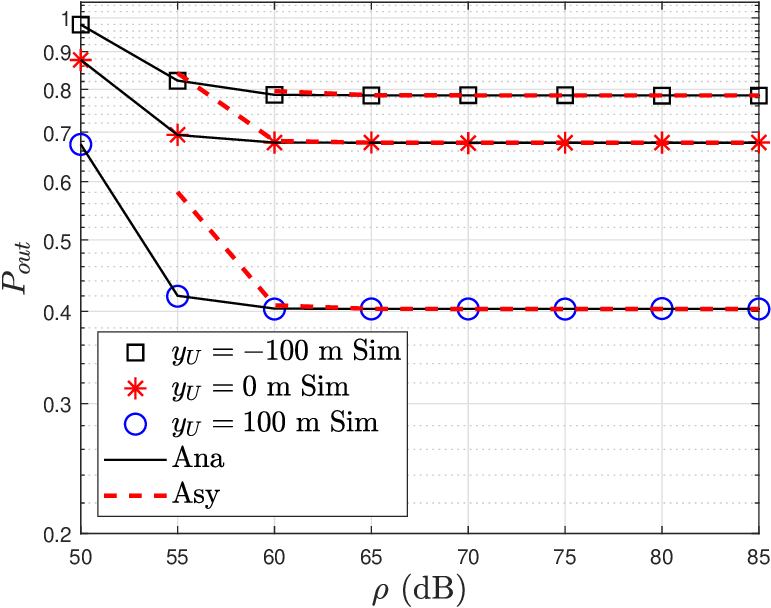}}
\subfigure[OP with DPA in suburban environments for varying $\rho$ and $y_U$.]{
\label{fig03b}
\includegraphics[width = 0.2285 \textwidth]{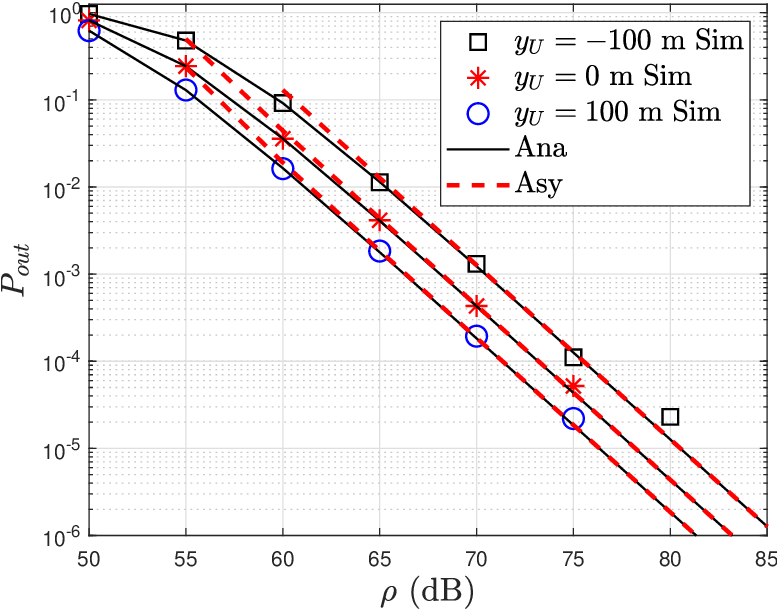}}
\subfigure[OP with FPA in urban environments for varying $\rho$ and $y_U$.]{
\label{fig03c}
\includegraphics[width = 0.2285 \textwidth]{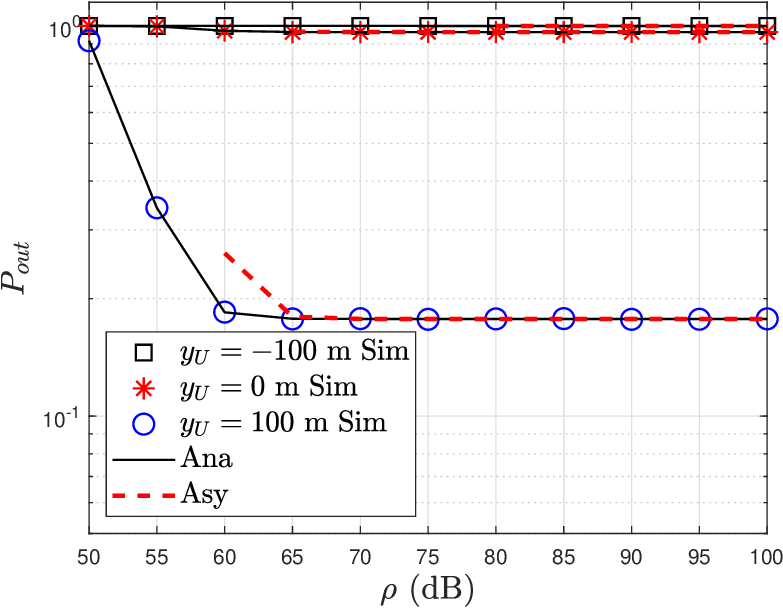}}
\subfigure[OP with DPA in urban environments for varying $\rho$ and $y_U$.]{
\label{fig03d}
\includegraphics[width = 0.2285 \textwidth]{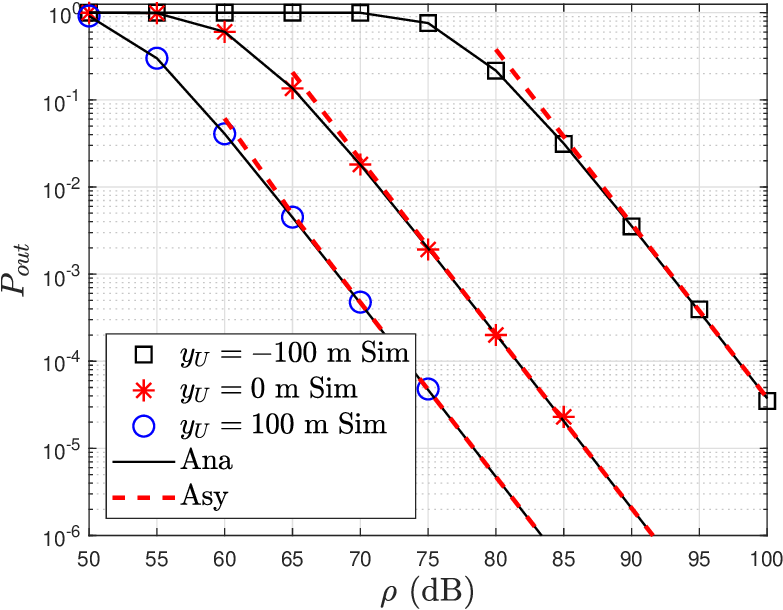}}
\caption{{The effect of UAV's position and $\rho$ on the OP of $D_F$ with ${\Theta _{th}} > 1 + \frac{1}{{{\Theta _B} - 1}}$}.}
\label{fig03}
\end{figure}
Fig. \ref{fig03} demonstrates the effect of UAV's position and the transmission SNR on the OP of $D_F$ when ${\Theta _{th}} > 1 + \frac{1}{{{\Theta _B} - 1}}$. 
It can be observed from Fig. \ref{fig03a} and Fig. \ref{fig03c} that the OP deteriorates as $\rho$ decreases at lower-$\rho$ region.
However, there is a floor for OP, which denotes that the OP gradually approaches a constant at the higher-$\rho$ region.
This is because the base station needs to allocate more power to $D_B$, resulting in too much interference on $D_F$. 
When the signals for $D_B$ cannot be decoded on $D_F$ with increasing transmit SNR,  the SINR at $D_F$ tends to a constant, which is independent of $\rho $. 
As shown in Fig. \ref{fig03b} and Fig. \ref{fig03d}, one can also observe that the outage performance with the DPA scheme outperforms that with the FPA scheme. 
Moreover, it can be found that the outage performance of $D_F$ in the DPA scheme will also improve with the increase of $\rho$ in the case of  high-$\rho$ region. 
Specifically, the OP floor problem is solved by the DPA scheme.
The reason is that more power allocated to $D_B$ ensures that $D_F$ can decode the signal of $D_B$ successfully. 
With the SIC technology, the inter-user interference is deleted and the achievable rate of $D_F$ is improved, thereby the performance is enhanced. 

Furthermore, it can be seen in Figs. \ref{fig02} and \ref{fig03} that OP is improved as the distance between $U$ and $D_F$ decreases. 
The reason is given as follows. Compared with the probability of loss propagation, the main factor on OP is path loss, which decreases with distance.

\begin{figure}[t]
\centering
\subfigure[OP with FPA in suburban environments for varying $\rho$, ${R_{{\rm{th}}}^B}$, and ${R_{{\rm{th}}}^F}$.]{
\label{fig04a}
\includegraphics[width = 0.2285 \textwidth]{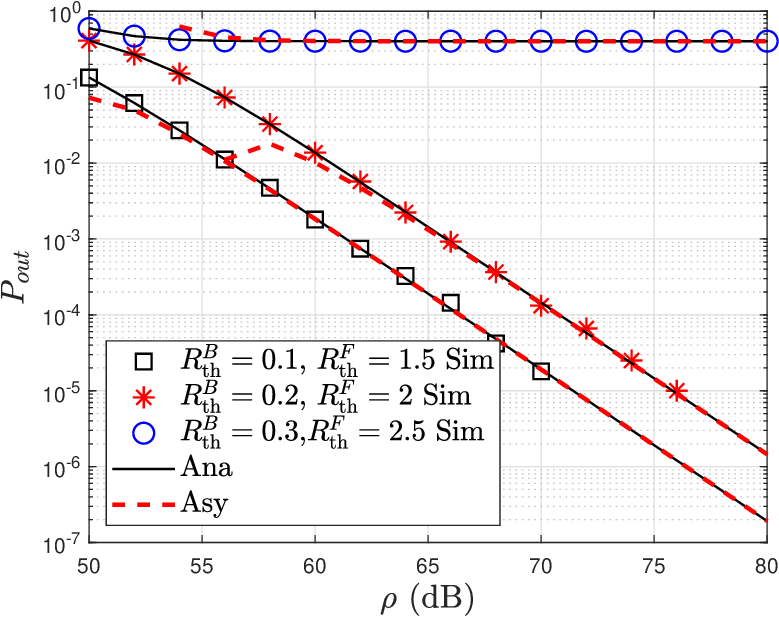}}
\subfigure[OP with DPA in suburban environments for varying $\rho$, ${R_{{\rm{th}}}^B}$, and ${R_{{\rm{th}}}^F}$.]{
\label{fig04b}
\includegraphics[width = 0.2285 \textwidth]{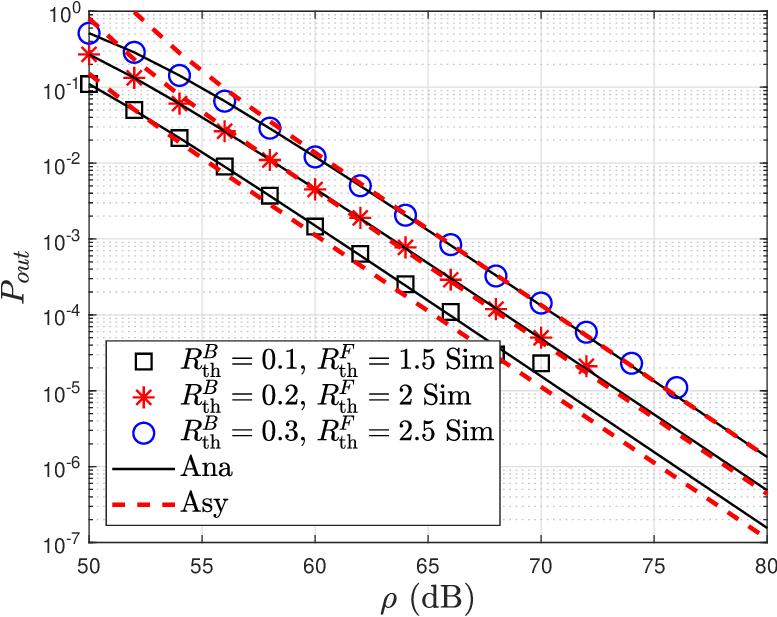}}
\subfigure[OP with FPA in urban environments for varying $\rho$, ${R_{{\rm{th}}}^B}$, and ${R_{{\rm{th}}}^F}$.]{
\label{fig04c}
\includegraphics[width = 0.2285 \textwidth]{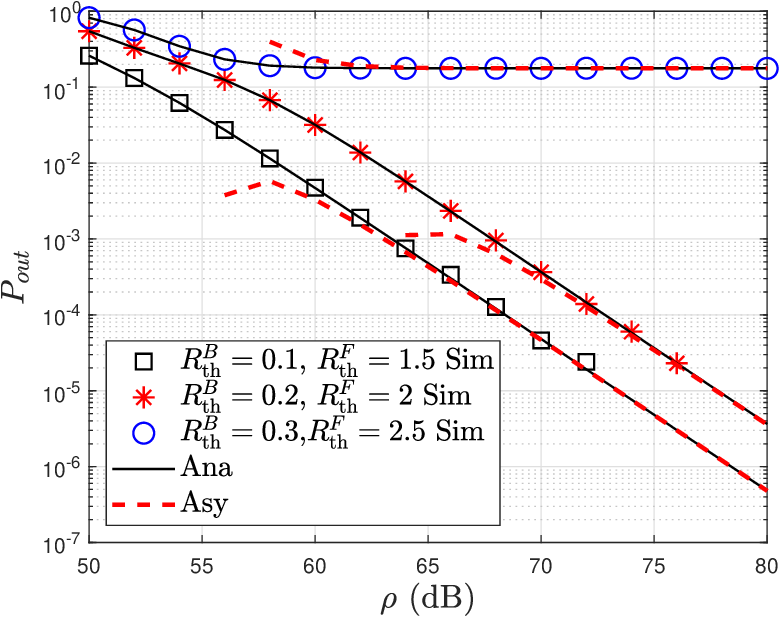}}
\subfigure[OP with DPA in urban environments for varying $\rho$, ${R_{{\rm{th}}}^B}$, and ${R_{{\rm{th}}}^F}$.]{
\label{fig04d}
\includegraphics[width = 0.2285 \textwidth]{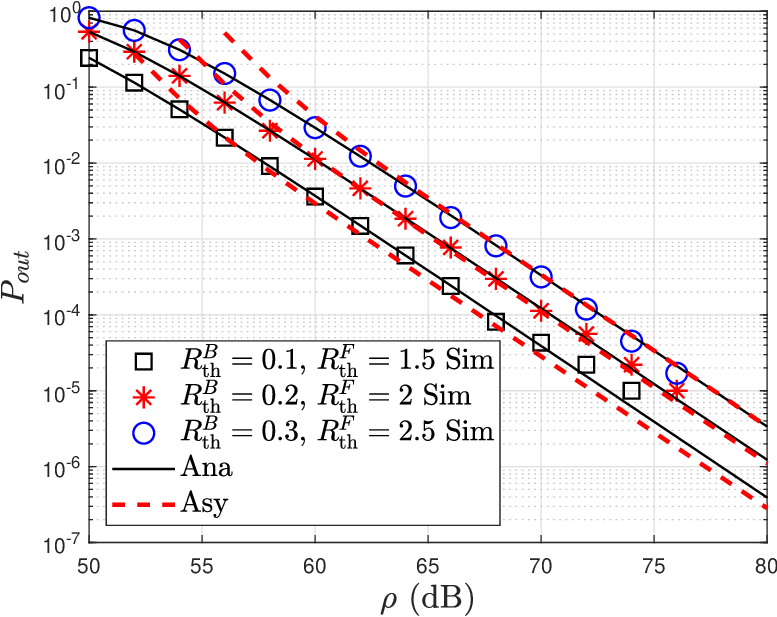}}
\caption{{The impact of RRT of $D_B$ and $D_F$ on the OP of $D_F$}.}
\label{fig04}
\end{figure}
Fig. \ref{fig04} demonstrates the impact of RRT of $D_B$ and $D_F$ on the OP of $D_F$. 
It can be observed that the larger the ${R_{{\rm{th}}}^B}$ and ${R_{{\rm{th}}}^F}$, the larger the OP, which is easy to follow because larger RRT denotes higher requirement.
As demonstrated in Fig. \ref{fig04a} and Fig. \ref{fig04c}, it can be observed that the relationship between the RRT for $D_B$ and $D_F$ under the condition in which the resources can be shared also significantly affects the OP of $D_F$, which is testified in Theorem 1. 
Furthermore, the results in Fig. \ref{fig04b} and Fig. \ref{fig04d} verify that the DPA scheme solves the OP floors perfectly. 
Then, the effectiveness of the DPA scheme is testified. 
It must be noted that the power allocation in the DPA scheme not only depends on the global CSI but also on the RRT of $D_B$, which is expressed in Eq. (\ref{rateDPA}). 
Thus, jointly designing the RRT of $D_B$ and $D_F$ based on the global CSI can maximize the achievable rate of GF users while ensuring the QoS of the GB user, which will be part of future work.


\begin{figure}[t]
\centering
\subfigure[OP with FPA in suburban environments for varying $y_U$ and $z_U$.]{
\label{fig05a}
\includegraphics[width = 0.2285 \textwidth]{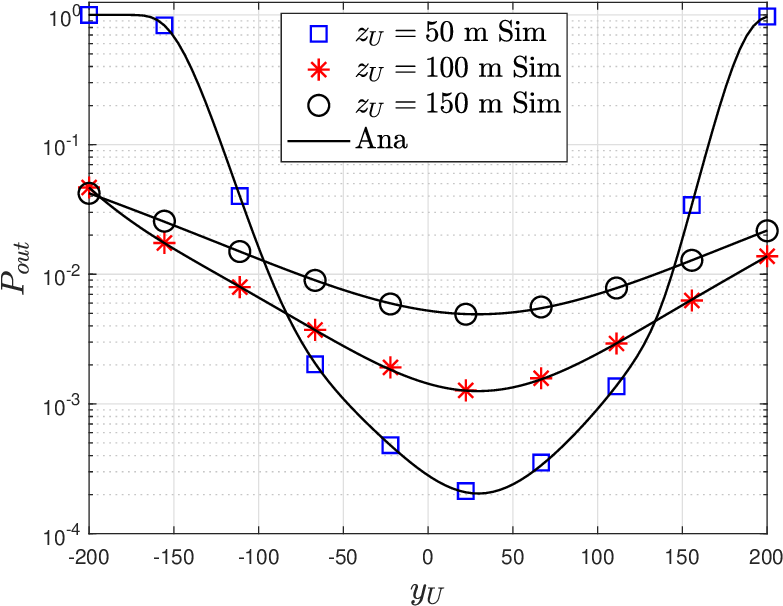}}
\subfigure[OP with DPA in suburban environments for varying $y_U$ and $z_U$.]{
\label{fig05b}
\includegraphics[width = 0.2285 \textwidth]{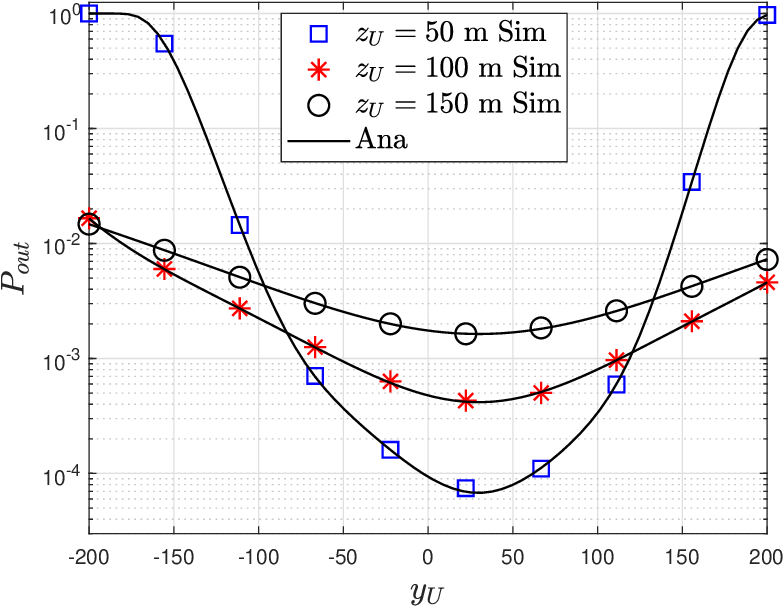}}
\subfigure[OP with FPA in urban environments for varying $y_U$ and $z_U$.]{
\label{fig05c}
\includegraphics[width = 0.2285 \textwidth]{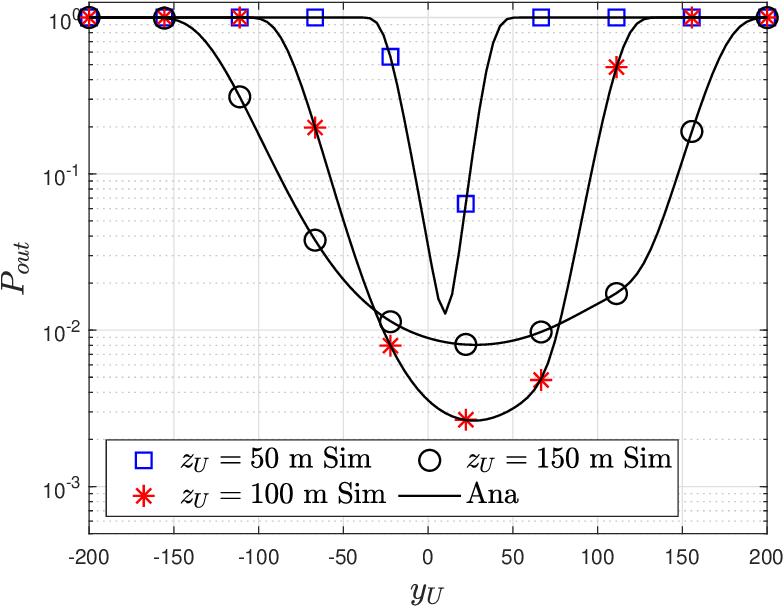}}
\subfigure[OP with DPA in urban environments for varying $y_U$ and $z_U$.]{
\label{fig05d}
\includegraphics[width = 0.2285 \textwidth]{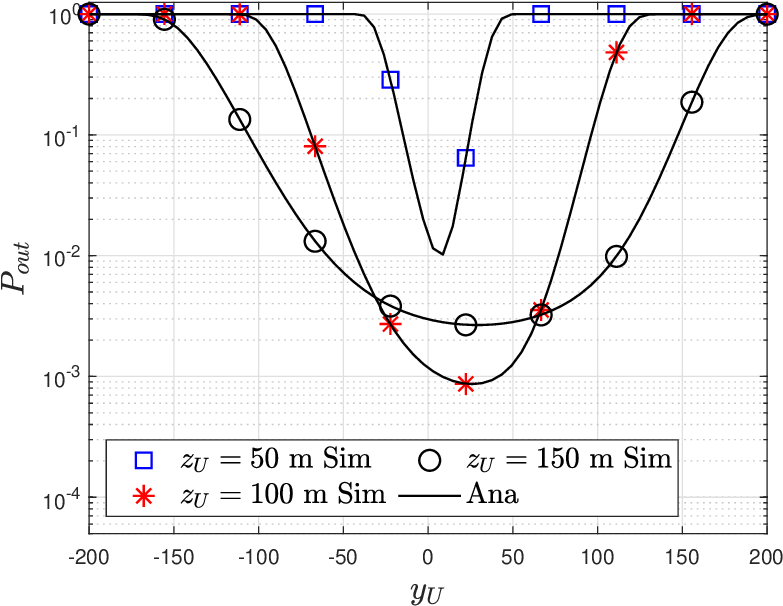}}
\caption{{The impact of the UAV's position and altitude on the OP of $D_F$ with ${R_{{\rm{th}}}^B} = 0.2$ and ${R_{{\rm{th}}}^F} = 2$}.}
\label{fig05}
\end{figure}

Figs. \ref{fig05} and \ref{fig06} demonstrate the impact of the UAV's position and altitude on the OP of the GF user with varying RRT requirements.
One can observe that the OP first decreases and then increases, which means there is an optimal location for the UAV so that the GF user can obtain the optimal performance. 
At the same time, it can be observed that under different environments and power allocation schemes, the 3D location of the UAV has different effects on the OP. 
This is because the trade-off between the probability of LoS propagation and the path loss caused by long-distance communication are different.
Comparing Fig. \ref{fig05} and Fig. \ref{fig06}, it can be observed that the OP with the DPA scheme outperforms that with the FPA scheme. 
The result in Fig. \ref{fig05a} and Fig. \ref{fig06a} demonstrates that the relationship between the RRT for $D_B$ and $D_F$ makes a big difference to the OP of $D_F$ in suburban environments with the FPA scheme. 
The same conclusion can also be observed by comparing Fig. \ref{fig05c} and Fig. \ref{fig06c}. 
However, the effect of the relationship between the RRT for $D_B$ and $D_F$ makes little difference to the OP of $D_F$ in the same environments with the DPA scheme, which is found by comparing Fig. \ref{fig05b} and Fig. \ref{fig06b}, and Fig. \ref{fig05d} and Fig. \ref{fig06d}. 
The reason is that the CSI of $D_F$ is also considered in the DPA scheme, which is expressed in Eq. (\ref{rateDPA}).

\begin{figure}[t]
\centering
\subfigure[OP with FPA in suburban environments for varying $y_U$ and $z_U$.]{
\label{fig06a}
\includegraphics[width = 0.2285 \textwidth]{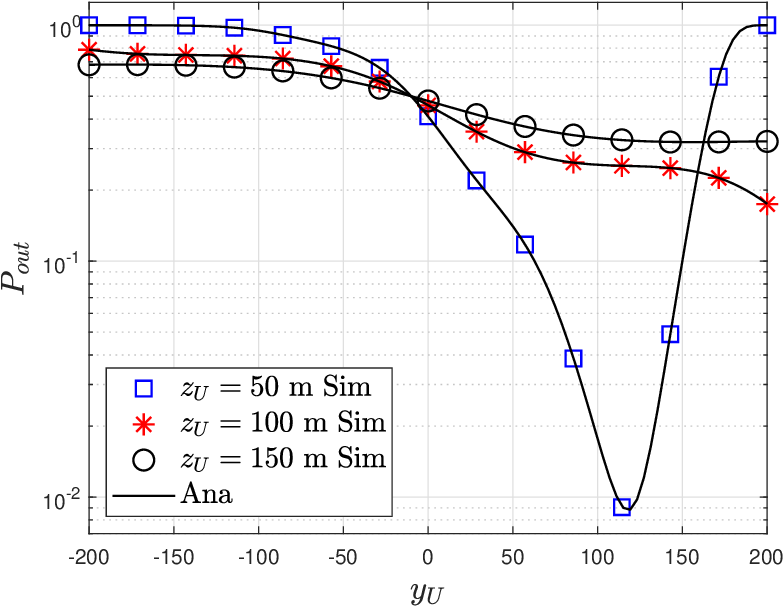}}
\subfigure[OP with DPA in suburban environments for varying $y_U$ and $z_U$.]{
\label{fig06b}
\includegraphics[width = 0.2285 \textwidth]{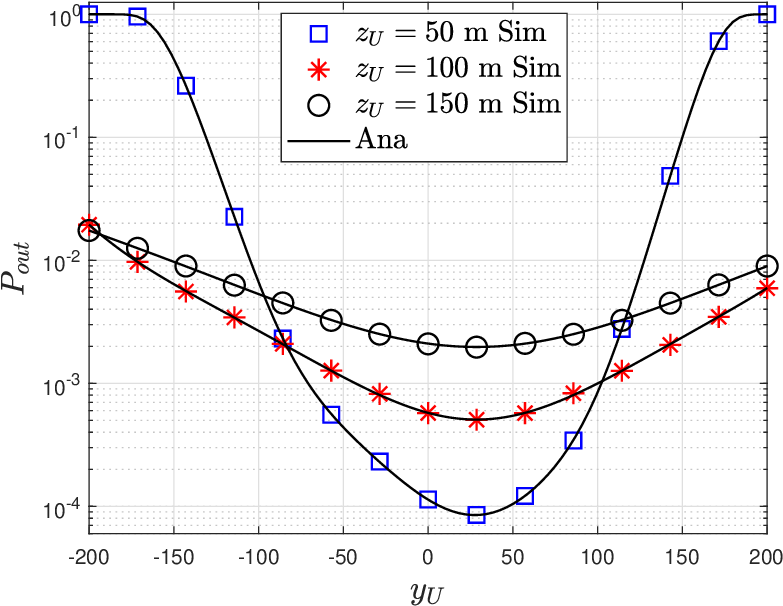}}
\subfigure[OP with FPA in urban environments for varying $y_U$ and $z_U$.]{
\label{fig06c}
\includegraphics[width = 0.2285 \textwidth]{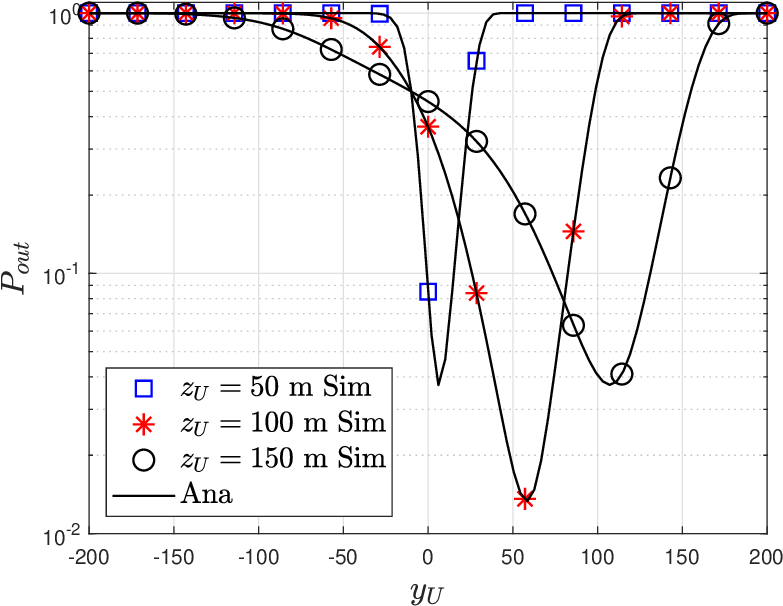}}
\subfigure[OP with DPA in urban environments for varying $y_U$ and $z_U$.]{
\label{fig06d}
\includegraphics[width = 0.2285 \textwidth]{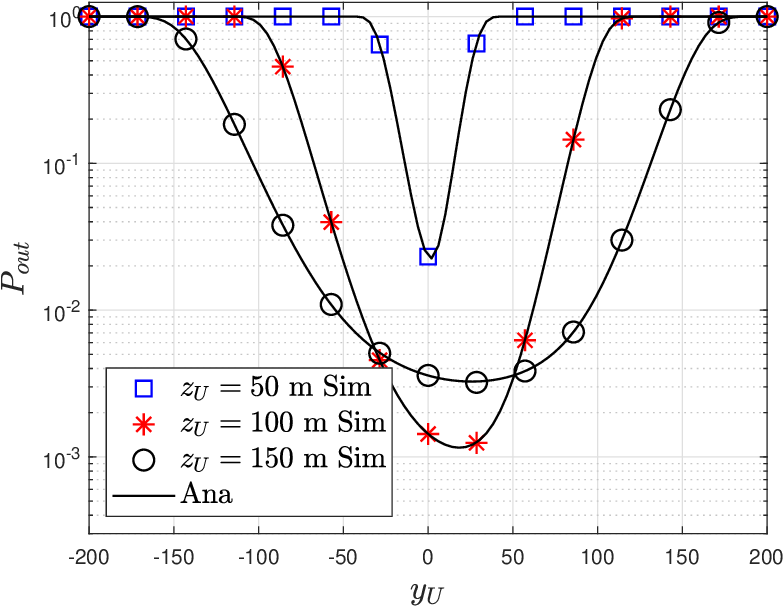}}
\caption{ {The impact of the UAV's position and altitude on the OP of $D_F$ with ${R_{{\rm{th}}}^B} = 0.5$ and ${R_{{\rm{th}}}^F} = 2.5$.}}
\label{fig06}
\end{figure}

\section{Conclusions}
\label{sec:Conclusion}

This work analyzed the outage performance of an aerial SGF NOMA system. 
Firstly, the outage performance of UAV-enabled downlink NOMA systems was analyzed with the SGF transmission scheme.
The exact and asymptotic expressions for the OP of the GF user under the FPA scheme were derived.
It was found that there are OP floors under stringent conditions on quality of service requirements.
A DPA scheme was proposed to eliminate the OP floor at the high-SNR region. The analytical expressions for the exact and asymptotic OP of the GF user were derived applicable to this later scheme.
Numerical and simulation results demonstrate that the outage performance of the GF user was improved by utilizing the proposed DPA scheme.
And the effects of system parameters, such as UAV location and altitude, on the outage performance were analyzed.
It was observed that there is an optimal position for the UAV to minimize $D_F$'s OP. The optimal placement is related to the environment, the altitude of the UAV, and the transmission SNR. Optimizing the outage performance of the GF user by designing the trajectory of the UAV will be conducted as part of our future work. 

\section*{Acknowledgements}
This work was supported by the National Natural Science Foundation of China under Grant 61971080 and the Open Fund of the Shaanxi Key Laboratory of Information Communication Network and Security under Grant ICNS201807. 
The authors would like to thank the editor for efficiently handling the review of this paper and the anonymous reviewers for their valuable suggestions and critical comments that helped to improve the quality of the paper.

\begin{appendices}
\section{Proof of Theorem 1}
\label{prooftheorem1}

Based on (\ref{G_X_cdf}), we obtain ${T_0} = {F_{{G_B}}}\left( {{\varepsilon _1}} \right)$.

{On the ground of} (\ref{rateF}), $T_1$ is expressed as
\begin{equation}
\begin{aligned}
{T_1} &= \underbrace {\Pr \left\{ {{G_B} > {\varepsilon _1},{G_F} > {G_B},R_F^1 < {R_{{\rm{th}}}^F} } \right\}}_{{T_{11}}}\\
&+ \underbrace {\Pr \left\{ {{G_B} > {\varepsilon _1},{G_F} < {G_B},R_F^2 < {R_{{\rm{th}}}^F}} \right\}}_{{T_{12}}}
\label{T101}
\end{aligned}
\end{equation}
where
${T_{11}}$ denotes $D_F$'s signal is decoded at the first stage of SIC and ${T_{12}}$ denotes $D_F$'s signal is decoded at the second stage of SIC.
{From} (\ref{pacDB}) and (\ref{rateF1}), we have
\begin{equation}
\begin{aligned}
&\Pr \left\{ {{{\log }_2}\left( {1 + {\bar \omega }\rho {G_F}} \right) < {R_{{\rm{th}}}^F} } \right\} \\
&= \Pr \left\{ {{G_F} < \frac{{{\Theta _{th}} - 1}}{{{\bar \omega }\rho }} } \right\}\\
&= \Pr \left\{ {{G_F} < \frac{{{\varepsilon _2}{G_B}}}{{{G_B} - {\varepsilon _1}}} } \right\}
\label{eq24}
\end{aligned}
\end{equation}
where
${\Theta _{th}} = {2^{R_{{\rm{th}}}^F}}$,
${\varepsilon _2} = \frac{{{\Theta _B}\left( {{\Theta _{th}} - 1} \right)}}{\rho }$,
and
${\Theta _B} = {2^{R_{{\rm{th}}}^B}}$.
Then, {grounded} on (\ref{G_X_cdf}) and (\ref{G_X_pdf}), $T_{11}$ is expressed as 
\begin{equation}
\begin{aligned}
{T_{11}} &= \Pr \left\{ {{G_B} > {\varepsilon _1},{G_F} > {G_B},{G_F} < \frac{{{\varepsilon _2}{G_B}}}{{{G_B} - {\varepsilon _1}}} } \right\}\\
& = \Pr \left\{ {{G_B} > {\varepsilon _1},{G_B} < {G_F} < \frac{{{\varepsilon _2}{G_B}}}{{{G_B} - {\varepsilon _1}}},} \right.\\
&\;\;\;\;\;\;\;\;\;\;\; \left. {{G_B} < \frac{{{\varepsilon _2}{G_B}}}{{{G_B} - {\varepsilon _1}}}} \right\}\\
&= \Pr \left\{ {{G_B} < {G_F} < \frac{{{\varepsilon _2}{G_B}}}{{{G_B} - {\varepsilon _1}}},{\varepsilon _1} < {G_B} < {\varepsilon _0} } \right\}\\
&= \underbrace {\Pr \left\{ {{G_F} < \frac{{{\varepsilon _2}{G_B}}}{{{G_B} - {\varepsilon _1}}},{\varepsilon _1} < {G_B} < {\varepsilon _0}{\mkern 1mu} } \right\}}_{ \buildrel \Delta \over = {\chi _1}} \\
&\;\;\;\;\;\; - \underbrace {\Pr \left\{ {{G_B} < {G_F},{\varepsilon _1} < {G_B} < {\varepsilon _0}{\mkern 1mu} } \right\}}_{ \buildrel \Delta \over = {\chi _2}}
\label{T1101}
\end{aligned}
\end{equation}	
where
${\varepsilon _0} = {\varepsilon _1} + {\varepsilon _2}  = \frac{{{\Theta _B}{\Theta _{th}} - 1}}{\rho }$ and 
${\chi _1}$ is expressed as 
\begin{equation}
\begin{aligned}
{\chi _1} &= \Pr \left\{ {{G_F} < \frac{{{\varepsilon _2}{G_B}}}{{{G_B} - {\varepsilon _1}}},{\varepsilon _1} < {G_B} < {\varepsilon _0}{\mkern 1mu} } \right\}\\
&= \int_{{\varepsilon _1}}^{{\varepsilon _0}} {{f_{{G_B}}}\left( x \right){F_{{G_F}}}\left( {\frac{{{\varepsilon _2}x}}{{x - {\varepsilon _1}}}} \right)dx} \\
&= {F_{{G_B}}}\left( {{\varepsilon _0}} \right) - {F_{{G_B}}}\left( {{\varepsilon _1}} \right) - {A_1}{\Phi_1}
\label{chi1}
\end{aligned}
\end{equation}
where
${A_1} = \frac{{\lambda _{B}^m}}{{\Gamma \left( m \right)}}$
and
$\Phi_1  = \sum\limits_{i = 0}^{m - 1} {\frac{{\lambda _F^i}}{{i!}}\int_{{\varepsilon _1}}^{{\varepsilon _0}} {{y^{m - 1}}{{\left( {\frac{{{\varepsilon _2}y}}{{y - {\varepsilon _1}}}} \right)}^i}{e^{ - {\lambda _B}y - \frac{{{\lambda _F}{\varepsilon _2}y}}{{y - {\varepsilon _1}}}}}dy} } $.
To the authors' best knowledge, it is very difficult to obtain the closed-form expression of $\Phi _1$.
To facilitate the following analysis, we define
\begin{equation}
\begin{aligned}
{g_1}\left( {a,b,s,t} \right) &= \sum\limits_{i = 0}^{m - 1} {\frac{{{{\left( {{\lambda _F}b} \right)}^i}}}{{i!}}} \int_s^t {\frac{{{y^{m + i - 1}}}}{{{{\left( {y - a} \right)}^i}}}{e^{ - {\lambda _B}y - \frac{{b{\lambda _F}y}}{{y - a}}}}dy}
\end{aligned}
\end{equation}
Utilizing Gaussian-Chebyshev quadrature \cite[(25.4.39)]{Abramowitz1972Book}, we obtain the approximation for ${g_1}\left( {a,b,s,t} \right)$ as
\begin{equation}
\begin{aligned}
{g_1}\left( {a,b,s,t} \right) &= \frac{\pi }{N}\sum\limits_{i = 0}^{m - 1} {\frac{{{{\left( {{\lambda _F}b} \right)}^i}}}{{i!}}\sum\limits_{n = 1}^N {\frac{{{{\left( {{\mu _n}\left( {s,t} \right)} \right)}^{m + i - 1}}}}{{{{\left( {{\mu _n}\left( {s,t} \right) - a} \right)}^i}}}} } \\
&\times {e^{ - {\lambda _B}{\mu _n}\left( {s,t} \right) - \frac{{{\lambda _F}b{\mu _n}\left( {s,t} \right)}}{{{\mu _n}\left( {s,t} \right) - a}}}}\\
&\times \sqrt {\left( {{\mu _n}\left( {s,t} \right) - s} \right)\left( {t - {\mu _n}\left( {s,t} \right)} \right)}
\label{g1}
\end{aligned}
\end{equation}
where ${\mu _n}\left( {s,t} \right) = \frac{{t + s}}{2} + \frac{{\left( {t - s} \right){\tau _n}}}{2}$, ${\tau _n} = \cos \frac{{\left( {2n - 1} \right)\pi }}{{2N}}$, and
$N$ is the summation terms, which reflects accuracy vs. complexity.
Then, we obtain ${\Phi _1} = {g_1}\left( {{\varepsilon _1},{\varepsilon _2},{\varepsilon _1},{\varepsilon _0}} \right)$.

Utilizing \cite[(3.351.1)]{GradshteynBook}, ${\chi _2}$ is obtained as
\begin{equation}
\begin{aligned}
{\chi _2} &= \Pr \left\{ {{G_B} < {G_F},{\varepsilon _1} < {G_B} < {\varepsilon _0}{\mkern 1mu} } \right\}\\
&= \int_{{\varepsilon _1}}^{{\varepsilon _0}} {{f_{{G_B}}}\left( y \right){F_{{G_F}}}\left( y \right)dy} \\
&= {F_{{G_B}}}\left( {{\varepsilon _0}} \right) - {F_{{G_B}}}\left( {{\varepsilon _1}} \right) - \frac{{\lambda _{{G_B}}^m}}{{\Gamma \left( m \right)}}\sum\limits_{i = 0}^{m - 1} {\frac{{\lambda _F^i}}{{i!}}} \\
&\int_{{\varepsilon _1}}^{{\varepsilon _0}} {{e^{ - \left( {{\lambda _B} + {\lambda _F}} \right)y}}{y^{m + i - 1}}dy} \\
& = {F_{{G_B}}}\left( {{\varepsilon _0}} \right) - {F_{{G_B}}}\left( {{\varepsilon _1}} \right) - {A_1}\sum\limits_{i = 0}^{m - 1} {\frac{{\lambda _F^i{\Phi _2}}}{{i!}}}
\label{chi2}
\end{aligned}
\end{equation}
where
${A_2} = {\lambda _{{B}}} + {\lambda _{{F}}}$,
${\Phi _2} = \frac{{\Upsilon \left( {i + m,{A_2}{\varepsilon _0}} \right) - \Upsilon \left( {i + m,{A_2}{\varepsilon _1}} \right)}}{{A_2^{i + m}}}$,
and
$\Upsilon \left( { \cdot , \cdot } \right)$ is lower incomplete Gamma function as defined by  \cite[(8.350.2)]{GradshteynBook}.

Similar to ${T_{11}}$, $T_{12}$ is expressed as
\begin{equation}
\begin{aligned}
{T_{12}} & = \Pr \left\{ {{G_B} > {\varepsilon _1},{G_F} < {G_B},} \right.\\
& \left. {{{\log }_2}\left( {1 + \frac{{\bar \omega \rho {G_F}}}{{1 + \rho \omega {G_F}}}} \right) < R_{{\rm{th}}}^F} \right\}\\
&  = \Pr \left\{ {{G_B} > {\varepsilon _1},{G_F} < {G_B},} \right.\\
& \left. {\rho \left( {1 - {\Theta _{th}}\omega } \right){G_F} < {\Theta _{th}} - 1} \right\}
\label{T201}
\end{aligned}
\end{equation}
Since there is
\begin{equation}
\begin{aligned}
&\Pr \left\{ {\rho \left( {1 - {\Theta _{th}}{\omega }} \right){G_F} < {\Theta _{th}} - 1} \right\} =  \Pr \left\{ {1 - {\Theta _{th}}{\omega } < 0} \right\}\\
&+\Pr \left\{ {{G_F} < \frac{{{\varepsilon _4}{G_B}}}{{{G_B} - {\varepsilon _3}}},1 - {\Theta _{th}}{\omega } > 0} \right\}
\label{eq31}
\end{aligned}
\end{equation}
where ${\varepsilon _3} = \frac{{{\Theta _B} - 1}}{\rho }\frac{{{\Theta _{th}}}}{{{\Theta _{th}} - \left( {{\Theta _{th}} - 1} \right){\Theta _B}}} = {\varepsilon _1}\frac{{{\Theta _{th}}}}{{{\Theta _{th}} - \left( {{\Theta _{th}} - 1} \right){\Theta _B}}} > {\varepsilon _1}$ and ${\varepsilon _4} = \frac{{{\Theta _B}\left( {{\Theta _{th}} - 1} \right)}}{{\left( {{\Theta _B} - {\Theta _{th}}\left( {{\Theta _B} - 1} \right)} \right)\rho }}$.
Furthermore, we obtain the following result 
\begin{equation}
\begin{aligned}
&\Pr \left\{ {1 - {\omega }{\Theta _{th}} < 0} \right\} \\
&= \Pr \left\{ {1 - \frac{{\rho {\Theta _B}{G_B} - \rho {G_B} + {\Theta _B} - 1}}{{\rho {\Theta _B}{G_B}}}{\Theta _{th}} < 0} \right\}\\
&= \Pr \left\{ {\rho \left( {\frac{{{\Theta _B}}}{{{\Theta _B} - 1}} - {\Theta _{th}}} \right){G_B} < {\Theta _{th}}} \right\}
\label{eq32}
\end{aligned}
\end{equation}
Thus, the relationship between ${{\Theta _{th}}}$ and ${\frac{{{\Theta _B}}}{{{\Theta _B} - 1}}}$ must be considered first.

1) When ${\Theta _{th}} < \frac{{{\Theta _B}}}{{{\Theta _B} - 1}}$, we have
\begin{equation}
\begin{aligned}
\Pr \left\{ {1 - {\omega }{\Theta _{th}}} < 0 \right\} & = \Pr \left\{ {{G_B} < {\varepsilon _3}} \right\}
\label{A3}
\end{aligned}
\end{equation}
Thus, we have
\begin{equation}
\begin{aligned}
&\Pr \left\{ {\rho \left( {1 - {\Theta _{th}}{\omega }} \right){G_F} < {\Theta _{th}} - 1} \right\} \\
&= \Pr \left\{ {{G_B} < {\varepsilon _3}} \right\}  \\
& + \Pr \left\{ {{G_F} < \frac{{{\varepsilon _4}{G_B}}}{{{G_B} - {\varepsilon _3}}},{G_B} > {\varepsilon _3}} \right\}
\label{delat1}
\end{aligned}
\end{equation}
Substituting (\ref{delat1}) into (\ref{T201}), we have
\begin{equation}
\begin{aligned}
&{T_{12a}} = \Pr \left\{ {{G_F} < {G_B},{\varepsilon _1} < {G_B} < {\varepsilon _3}} \right\} \\
&+ \Pr \left\{ {{G_B} > {\varepsilon _3},{G_F} < \min \left( {{G_B},\frac{{{\varepsilon _4}{G_B}}}{{{G_B} - {\varepsilon _3}}}} \right)} \right\}	
\label{T12a1}
\end{aligned}
\end{equation}
The 2nd term in (\ref{T12a1}) is derived as
\begin{equation}
\begin{small}
\begin{aligned}
	&\Pr \left\{ {{G_B} > {\varepsilon _3},{G_F} < \min \left( {{G_B},\frac{{{\varepsilon _4}{G_B}}}{{{G_B} - {\varepsilon _3}}}} \right)} \right\}\\
	&= \Pr \left\{ {{G_B} > {\varepsilon _3},{G_F} < {G_B},{G_B} < \frac{{{\varepsilon _4}{G_B}}}{{{G_B} - {\varepsilon _3}}}} \right\} \\
	&+ \Pr \left\{ {{G_B} > {\varepsilon _3},{G_F} < \frac{{{\varepsilon _4}{G_B}}}{{{G_B} - {\varepsilon _3}}},{G_B} > \frac{{{\varepsilon _4}{G_B}}}{{{G_B} - {\varepsilon _3}}}} \right\}\\
	&= \Pr \left\{ {{G_B} > {\varepsilon _3},{G_F} < {G_B},{G_B} < {\varepsilon _5}} \right\} \\
	&+ \Pr \left\{ {{G_B} > {\varepsilon _3},{G_F} < \frac{{{\varepsilon _4}{G_B}}}{{{G_B} - {\varepsilon _3}}},{G_B} > {\varepsilon _5}} \right\}
	\label{eq36}
\end{aligned}
\end{small}
\end{equation}
where
${\varepsilon _5} = {\varepsilon _3} + {\varepsilon _4} = \frac{{2{\Theta _B}{\Theta _{th}} - {\Theta _{th}} - {\Theta _B}}}{{\rho \left( {{\Theta _{th}} + {\Theta _B} - {\Theta _{th}}{\Theta _B}} \right)}}$.	
Then, $T_{12a}$ is expressed as
\begin{equation}
\begin{aligned}
{T_{12a}} &= \Pr \left\{ {{G_F} < {G_B},{\varepsilon _1} < {G_B} < {\varepsilon _3}} \right\} \\
&+ \Pr \left\{ {{G_F} < {G_B},{\varepsilon _3} < {G_B} < {\varepsilon _5}} \right\}\\
&+ \Pr \left\{ {{G_F} < \frac{{{\varepsilon _4}{G_B}}}{{{G_B} - {\varepsilon _3}}},{G_B} > {\varepsilon _3},{G_B} > {\varepsilon _5}} \right\}\\
&= \underbrace {\Pr \left\{ {{G_F} < {G_B},{\varepsilon _1} < {G_B} < {\varepsilon _5}} \right\}}_{ \buildrel \Delta \over = {\chi _3}} \\
&+ \underbrace {\Pr \left\{ {{G_F} < \frac{{{\varepsilon _4}{G_B}}}{{{G_B} - {\varepsilon _3}}},{G_B} > {\varepsilon _5}} \right\}}_{ \buildrel \Delta \over = {\chi _4}}
\label{T12a2}
\end{aligned}
\end{equation}
With the same method as (\ref{chi2}), we obtain
\begin{equation}
{\chi _3} = {F_{{G_B}}}\left( {{\varepsilon _5}} \right) - {F_{{G_B}}}\left( {{\varepsilon _1}} \right) - {A_1}\sum\limits_{i = 0}^{m - 1} {\frac{{\lambda _F^i{\Phi _3}}}{{i!}}}
\label{chi3}
\end{equation}
where
${\Phi _3} = \frac{{\Upsilon \left( {i + m,{A_2}{\varepsilon _5}} \right) - \Upsilon \left( {i + m,{A_2}{\varepsilon _1}} \right)}}{{A_2^{i + m}}}$.
With the similar method as (\ref{chi1}), ${\chi _4}$ is obtained as
\begin{equation}
\begin{aligned}
{\chi _4} &= \Pr \left\{ {{G_F} < \frac{{{\varepsilon _4}{G_B}}}{{{G_B} - {\varepsilon _3}}},{G_B} > {\varepsilon _5}} \right\}\\
&= \int_{{\varepsilon _5}}^\infty  {{f_{{G_B}}}\left( y \right){F_{{G_F}}}\left( {\frac{{{\varepsilon _4}y}}{{y - {\varepsilon _3}}}} \right)dy} \\
&= {{\bar F}_{{G_B}}}\left( {{\varepsilon _5}} \right) - {A_1}{\Phi _4}
\label{chi4}
\end{aligned}
\end{equation}
where
${{\bar F}_{{G_X}}}\left( x \right) = 1 - {F_{{G_X}}}\left( x \right)$ and
${\Phi _4} = \sum\limits_{i = 0}^{m - 1} {\frac{{{{\left( {{\lambda _F}{\varepsilon _4}} \right)}^i}}}{{i!}}} \int_{{\varepsilon _5}}^\infty  {{e^{ - {\lambda _B}y - \frac{{{\lambda _F}{\varepsilon _4}y}}{{y - {\varepsilon _3}}}}}\frac{{{y^{m + i - 1}}}}{{{{\left( {y - {\varepsilon _3}} \right)}^i}}}dy}$.
To facilitate the following analysis, we define
\begin{equation}
\begin{aligned}
{g_2}\left( {a,b,c} \right) &= \sum\limits_{i = 0}^{m - 1} {\frac{{{{\left( {{\lambda _F}b} \right)}^i}}}{{i!}}} \int_c^\infty  {{e^{ - {\lambda _B}y - \frac{{b{\lambda _F}y}}{{y - a}}}}\frac{{{y^{m + i - 1}}}}{{{{\left( {y - a} \right)}^i}}}dy}
\label{g21}
\end{aligned}
\end{equation}
Utilizing Gaussian-Chebyshev quadrature \cite[(25.4.39)]{Abramowitz1972Book}, we obtain the approximation for ${g_2}\left( {a,b,c} \right)$ as
\begin{equation}
\begin{aligned}
&{g_2}\left( {a,b,c} \right) \\
&= \sum\limits_{i = 0}^{m - 1} {\frac{{{{\left( {{\lambda _F}b} \right)}^i}}}{{i!}}\sum\limits_{n = 1}^N {{w _n}{e^{{\iota _n} - \left( {{\lambda _B}{\iota _n} + \frac{{{\lambda _F}b{\iota _n}}}{{{\iota _n} - a}}} \right)}}\frac{{\iota _n^{m + i - 1}}}{{{{\left( {{\iota _n} - a} \right)}^i}}}} } \\
&- \sum\limits_{i = 0}^{m - 1} {\frac{{{{\left( {{\lambda _F}b} \right)}^i}\pi }}{{i!N}}\sum\limits_{n = 1}^N {{e^{ - \left( {{\lambda _B}{\mu _n}\left( {0,c} \right) + \frac{{{\lambda _F}b{\mu _n}\left( {0,c} \right)}}{{{\mu _n}\left( {0,c} \right) - a}}} \right)}}} } \\
&\times \frac{{{{\left( {{\mu _n}\left( {0,c} \right)} \right)}^{m + i - 1}}\sqrt {{\mu _n}\left( {0,c} \right)\left( {c - {\mu _n}\left( {0,c} \right)} \right)} }}{{{{\left( {{\mu _n}\left( {0,c} \right) - a} \right)}^i}}}
\label{g22}
\end{aligned}
\end{equation}
where $\iota _n$ is the $n$th zeros of Laguerre polynomials and ${w_n}$ is the Gaussian weight, which are given in Table (25.9) of \cite{Abramowitz1972Book}.
Then, we obtain ${\Phi _4} = {g_2}\left( {{\varepsilon _3},{\varepsilon _4},{\varepsilon _5}} \right)$.

2) When ${\Theta _{th}} > \frac{{{\Theta _B}}}{{{\Theta _B} - 1}}$, due to $\Pr \left\{ {1 - {\omega }{\Theta _{th}} < 0} \right\} = 1$, we obtain 
\begin{equation}
\begin{aligned}
{T_{12b}} & = \Pr \left\{ {{G_F} < {G_B},{G_B} > {\varepsilon _1}} \right\}\\
&= \int_{{\varepsilon _1}}^\infty  {{f_{{G_B}}}\left( y \right){F_{{G_F}}}\left( y \right)dy} \\
&= {{\bar F}_{{G_B}}}\left( {{\varepsilon _1}} \right) - \frac{{\lambda _B^m}}{{\Gamma \left( m \right)}}\sum\limits_{i = 0}^{m - 1} {\frac{{\lambda _F^i}}{{i!}}} \int_{{\varepsilon _1}}^\infty  {{y^{m + i - 1}}{e^{ - \left( {{\lambda _B} + {\lambda _F}} \right)y}}dy} \\
&= {{\bar F}_{{G_B}}}\left( {{\varepsilon _1}} \right) - {A_1}\sum\limits_{i = 0}^{m - 1} {\frac{{\lambda _F^i\Gamma \left( {i + m,{A_2}{\varepsilon _1}} \right)}}{{i!{A_2}^{i + m}}}}
\label{T12b}
\end{aligned}
\end{equation}
where $\Gamma \left( { \cdot , \cdot } \right)$ is upper incomplete Gamma function, which is defined by \cite[(3.351.2)]{GradshteynBook}.

\section{Proof of Corollary 1  }
\label{proofCorollary1}

Utilizing ${e^x} = \sum\limits_{j = 0}^\infty  {\frac{{{x^j}}}{{j!}}} $, we obtain $\sum\limits_{j = 0}^{n - 1} {\frac{{{x^j}}}{{j!}}}  = {e^x} - \frac{{{x^n}}}{{n!}} + O\left( {{x^n}} \right)$, then we have ${F_{{G_X}}}\left( x \right) \to \frac{{{{\left( {{\lambda _X}x} \right)}^m}}}{{m!}}$ when $x \to 0$.
Thus, we have $T_0^\infty  = F_{{G_B}}^\infty \left( {{\varepsilon _1}} \right) = \frac{{{{\left( {{\lambda _B}{\varepsilon _1}} \right)}^m}}}{{m!}}$.

When $\rho  \to \infty $, we have ${\varepsilon _1} \to 0$, ${\varepsilon _2} \to 0$, ${\varepsilon _3} \to 0$,
${\varepsilon _4} \to 0$,
and
$\frac{{{\varepsilon _2}{G_B} }}{{{G_B}  - {\varepsilon _1}}} \to {\varepsilon _2}$,
by utilizing $\Upsilon \left( {n,x} \right)\mathop  \approx  \limits^{x \to 0} \frac{{{x^n}}}{n}$, $T_{11}^\infty$ is obtained as 
\begin{equation}
\begin{aligned}
T_{11}^\infty  &= \Pr \left\{ {{G_B} < {G_F} < {\varepsilon _2},{\varepsilon _1} < {G_B} < {\varepsilon _0}} \right\}\\
&= \int_{{\varepsilon _1}}^{{\varepsilon _0}} {{f_{{G_B}}}\left( y \right)\left( {F_{{G_F}}^\infty \left( {{\varepsilon _2}} \right) - {F_{{G_F}}}\left( y \right)} \right)dy} \\
&= \int_{{\varepsilon _1}}^{{\varepsilon _0}} {{f_{{G_B}}}\left( y \right)\frac{{\lambda _F^m{\varepsilon _2}^m}}{{m!}}dy}  \\
&- \int_{{\varepsilon _1}}^{{\varepsilon _0}} {{f_{{G_B}}}\left( y \right)\left( {1 - {e^{ - {\lambda _F}y}}\sum\limits_{i = 0}^{m - 1} {\frac{{\lambda _F^i{y^i}}}{{i!}}} } \right)dy} \\
&= \left( {{\frac{{{{\left( {{\lambda _F}{\varepsilon _2}} \right)}^m}}}{{m!}} - 1}} \right)\int_{{\varepsilon _1}}^{{\varepsilon _0}} {{f_{{G_B}}}\left( y \right)dy}  \\
&+ \frac{{\lambda _B^m}}{{\Gamma \left( m \right)}}\sum\limits_{i = 0}^{m - 1} {\frac{{\lambda _F^i}}{{i!}}} \int_{{\varepsilon _1}}^{{\varepsilon _0}} {{y^{m + i - 1}}{e^{ - \left( {{\lambda _B} + {\lambda _F}} \right)y}}dy} \\
& \approx \frac{{\lambda _B^m\left( {\varepsilon _0^m - {\varepsilon _1}^m} \right)}}{{m!}}\left( {\frac{{{{\left( {{\lambda _F}{\varepsilon _2}} \right)}^m}}}{{m!}} - 1} \right) \\
&+  \frac{{\lambda _B^m}}{{\Gamma \left( m \right)}}\sum\limits_{i = 0}^{m - 1} {\frac{{\lambda _F^i\left( {\varepsilon _0^{i + m} - {\varepsilon _1}^{i + m}} \right)}}{{i!\left( {i + m} \right)}}}
\end{aligned}
\end{equation}

Similarly, due to
$\frac{{{\varepsilon _4}{G_B}}}{{{G_B} - {\varepsilon _3}}} \to {\varepsilon _4}$,
$T_{12a}^\infty$ is obtained as
\begin{equation}
T_{12a}^\infty  = \chi _3^\infty  + \chi _4^\infty
\end{equation}
where
\begin{equation}
\begin{aligned}
\chi _3^\infty &= \Pr \left\{ {{G_F} < {G_B},{\varepsilon _1} < {G_B} < {\varepsilon _5}} \right\}\\
&= \int_{{\varepsilon _1}}^{{\varepsilon _5}} {{f_{{G_B}}}\left( y \right){F_{{G_F}}}\left( y \right)dy} \\
&= \int_{{\varepsilon _1}}^{{\varepsilon _5}} {{f_{{G_B}}}\left( y \right)\left( {1 - {e^{ - {\lambda _F}y}}\sum\limits_{i = 0}^{m - 1} {\frac{{\lambda _F^i{y^i}}}{{i!}}} } \right)dy} \\
&\approx \frac{{\lambda _B^m\left( {\varepsilon _5^m - \varepsilon _1^m} \right)}}{{m!}} - {A_1}\sum\limits_{i = 0}^{m - 1} {\frac{{\lambda _F^i\left( {\varepsilon _5^{i + m} - \varepsilon _1^{i + m}} \right)}}{{i!\left( {i + m} \right)}}}
\label{chi3asy}
\end{aligned}
\end{equation}
and
\begin{equation}
\begin{aligned}
\chi _4^\infty & = \Pr \left\{ {{G_F} < {\varepsilon _4},{G_B} > {\varepsilon _5}} \right\}\\
&= \int_{{\varepsilon _5}}^\infty  {{f_B}\left( y \right)F_{{G_F}}^\infty \left( {{\varepsilon _4}} \right)dy} \\
&\approx \frac{{{{\left( {{\lambda _F}{\varepsilon _4}} \right)}^m}}}{{m!}}\left( {1 - \frac{{{{\left( {{\lambda _B}{\varepsilon _5}} \right)}^m}}}{{m!}}} \right)	
\label{chi4asy}
\end{aligned}
\end{equation}
Similarly, {on the ground of }(\ref{T12b}), $T_{12b}^\infty$ is obtained as
\begin{equation}
\begin{aligned}
T_{12b}^\infty  &\approx 1 - \frac{{{{\left( {{\lambda _B}{\varepsilon _1}} \right)}^m}}}{{m!}} \\
&- {A_1}\sum\limits_{i = 0}^{m - 1} {\frac{{\lambda _F^i}}{{i!{A_2}^{i + m}}}\left( {{\Gamma{\left( {i + m} \right)}} - \frac{{{{\left( {{A_2}{\varepsilon _1}} \right)}^{i + m}}}}{{i + m}}} \right)}
\label{A8}
\end{aligned}
\end{equation}

\section{Proof of Theorem 2  }
\label{prooftheorem2}

$T_0$ and $T_{11}$ are given in Theorem \ref{theorem1}.

Substituting (\ref{rateF2}) into (\ref{opDPA}), $T_2$ is expressed as
\begin{equation}
\begin{aligned}
{T_2} &= \Pr \left\{ {{G_B} > {\varepsilon _1},{G_F} < \frac{{{\Theta _B}{G_B}}}{{\rho {G_B} + 1}},} \right.\\
&\left. {\left( {1 - {\Theta _{th}}\omega } \right)\rho {G_F} < {\Theta _{th}} - 1} \right\}
\label{T2001}
\end{aligned}
\end{equation}

1) When ${\Theta _{th}} < \frac{{{\Theta _B}}}{{{\Theta _B} - 1}}$, we obtain 
\begin{equation}
\begin{footnotesize}
\begin{aligned}
	{T_{2a}} &= \Pr \left\{ {{G_B} > {\varepsilon _1},{G_F} < \frac{{{\Theta _B}{G_B}}}{{\rho {G_B} + 1}},{G_B} < {\varepsilon _3}} \right\}\\
	&+ \Pr \left\{ {{G_B} > {\varepsilon _1},{G_F} < \frac{{{\Theta _B}{G_B}}}{{\rho {G_B} + 1}},} \right.\\
	\;\;\;\;\;\;&\left. {{G_F} < \frac{{{\varepsilon _4}{G_B}}}{{{G_B} - {\varepsilon _3}}},{G_B} > {\varepsilon _3}} \right\}\\
	&= \underbrace {\Pr \left\{ {{G_F} < \frac{{{\Theta _B}{G_B}}}{{\rho {G_B} + 1}},{\varepsilon _1} < {G_B} < {\varepsilon _3}} \right\}}_{ \buildrel \Delta \over = {\chi _5}}\\
	&+ \underbrace {\Pr \left\{ {{G_F} < \min \left\{ {\frac{{{\Theta _B}{G_B}}}{{\rho {G_B} + 1}},\frac{{{\varepsilon _4}{G_B}}}{{{G_B} - {\varepsilon _3}}}} \right\},{G_B} > {\varepsilon _3}} \right\}}_{ \buildrel \Delta \over = {\chi _6}}	
\end{aligned}
\end{footnotesize}
\end{equation}
Considering the relationship between ${\frac{{{\Theta _B}{G_B}}}{{\rho {G_B} + 1}}}$ and ${\frac{{{\varepsilon _4}{G_B}}}{{{G_B} - {\varepsilon _3}}}}$, we obtain
\begin{equation}
\begin{aligned}
&\Pr \left\{ {\frac{{{\varepsilon _4}{G_B}}}{{{G_B} - {\varepsilon _3}}} - \frac{{{\Theta _B}{G_B}}}{{\rho {G_B} + 1}} > 0} \right\} \\
&= \left\{ {\begin{array}{*{20}{c}}
		{1,}&{{\Theta _B} > \frac{1}{{{\Theta _{th}} - 1}},{G_B} > {\varepsilon _3}}\\
		{\Pr \left\{ {{G_B} < {\varepsilon _6}} \right\},}&{{\Theta _B} < \frac{1}{{{\Theta _{th}} - 1}},{G_B} > {\varepsilon _3}}
\end{array}} \right.
\label{relation2}
\end{aligned}
\end{equation}
where ${\varepsilon _6} = \frac{{{\Theta _B}{\varepsilon _3} + {\varepsilon _4}}}{{{\Theta _B}\left( {{\Theta _B} + 1 - {\Theta _{th}}{\Theta _B}} \right)}}$.
Then, we obtain ${\chi _6} = \Pr \left\{ {{G_F} < \frac{{{\Theta _B}{G_B}}}{{\rho {G_B} + 1}},{G_B} > {\varepsilon _3}} \right\}$ for ${\Theta _B} > \frac{1}{{{\Theta _{th}} - 1}}$. Thus, we obtain
\begin{equation}
\begin{aligned}
T_{2a}^a &= \Pr \left\{ {{G_F} < \frac{{{\Theta _B}{G_B}}}{{\rho {G_B} + 1}},{\varepsilon _1} < {G_B} < {\varepsilon _3}} \right\}\\
&+ \Pr \left\{ {{G_F} < \frac{{{\Theta _B}{G_B}}}{{\rho {G_B} + 1}},{G_B} > {\varepsilon _3}} \right\}\\
&= \Pr \left\{ {{G_F} < \frac{{{\Theta _B}{G_B}}}{{\rho {G_B} + 1}},{G_B} > {\varepsilon _1}} \right\}\\
&= \int_{{\varepsilon _1}}^\infty  {{f_{{G_B}}}\left( y \right){F_{{G_F}}}\left( {\frac{{{\Theta _B}y}}{{\rho y + 1}}} \right)dy} \\
&= {{\bar F}_{{G_B}}}\left( {{\varepsilon _1}} \right) - {A_1}{g_2}\left( { - \frac{1}{\rho },\frac{{{\Theta _B}}}{\rho },{\varepsilon _1}} \right)
\label{eq51}
\end{aligned}
\end{equation}

For ${\Theta _B} < \frac{1}{{{\Theta _{th}} - 1}}$, we have
\begin{equation}
\begin{aligned}
{\chi _6} 
&= \Pr \left\{ {{G_F} < \frac{{{\Theta _B}{G_B}}}{{\rho {G_B} + 1}},{G_B} > {\varepsilon _3},{G_B} < {\varepsilon _6}} \right\}\\
&+ \Pr \left\{ {{G_F} < \frac{{{\varepsilon _4}{G_B}}}{{{G_B} - {\varepsilon _3}}},{G_B} > {\varepsilon _3},{G_B} > {\varepsilon _6}} \right\}
\label{chi652}
\end{aligned}
\end{equation}
Due to ${\varepsilon _6} - {\varepsilon _3} = \frac{{\left( {{\Theta _{th}} - 1} \right){\Theta _B}{\Theta _B}{\varepsilon _3} + {\varepsilon _4}}}{{{\Theta _B}\left( {1 - \left( {{\Theta _{th}} - 1} \right){\Theta _B}} \right)}} > 0$, we obtain (\ref{H23212}), shown at the top of the the next page.
\setcounter{equation}{53} 

2) When ${\Theta _{th}} > \frac{{{\Theta _B}}}{{{\Theta _B} - 1}}$, we have
\begin{equation}
{T_{2b}} = \Pr \left\{ {{G_F} < \frac{{{\Theta _B}{G_B}}}{{\rho {G_B} + 1}},{G_B} > {\varepsilon _1}} \right\} = {T_{2a}^a}
\end{equation}

With the similar method, $T_3$ is obtained as (\ref{T3001}), shown at the top of the next page,
where
\setcounter{equation}{55} 
${\Phi _5} = {g_2}\left( { - \frac{1}{\rho },\frac{{{\Theta _B}}}{\rho },{\varepsilon _1}} \right)$.

\setcounter{equation}{52}
\begin{figure*}[ht]
\begin{equation}
\begin{aligned}
	T_{2a}^b &=\Pr \left\{ {{G_F} < \frac{{{\Theta _B}{G_B}}}{{\rho {G_B} + 1}},{\varepsilon _1} < {G_B} < {\varepsilon _6}} \right\} + \Pr \left\{ {{G_F} < \frac{{{\varepsilon _4}{G_B}}}{{{G_B} - {\varepsilon _3}}},{G_B} > {\varepsilon _6}} \right\} \\
	& = \int_{{\varepsilon _1}}^{{\varepsilon _6}} {{f_{{G_B}}}\left( y \right){F_{{G_F}}}\left( {\frac{{{\Theta _B}y}}{{\rho y + 1}}} \right)dy}  + \int_{{\varepsilon _6}}^\infty  {{f_{{G_B}}}\left( y \right){F_{{G_F}}}\left( {\frac{{{\varepsilon _4}y}}{{y - {\varepsilon _3}}}} \right)dy}  \\
	&= \int_{{\varepsilon _1}}^\infty  {{f_{{G_B}}}\left( y \right)dy}  - \frac{{\lambda _B^m}}{{\Gamma \left( m \right)}}\sum\limits_{i = 0}^{m - 1} {\frac{{{{\left( {{\lambda _F}{\Theta _B}} \right)}^i}}}{{{\rho ^i}i!}}} \int_{{\varepsilon _1}}^{{\varepsilon _6}} {{y^{m - 1}}{e^{ - {\lambda _B}y - \frac{{{\lambda _F}\frac{{{\Theta _B}}}{\rho }y}}{{y + \frac{1}{\rho }}}}}\frac{{{y^{m + i - 1}}}}{{{{\left( {y + \frac{1}{\rho }} \right)}^i}}}dy}  \\
	&- \frac{{\lambda _B^m}}{{\Gamma \left( m \right)}}\sum\limits_{i = 0}^{m - 1} {\frac{{{{\left( {{\lambda _F}{\varepsilon _4}} \right)}^i}}}{{i!}}} \int_{{\varepsilon _6}}^\infty  {{e^{ - {\lambda _B}y - \frac{{{\lambda _F}{\varepsilon _4}y}}{{y - {\varepsilon _3}}}}}\frac{{{y^{m + i - 1}}}}{{{{\left( {y - {\varepsilon _3}} \right)}^i}}}dy}  \\
	&= {{\bar F}_{{G_B}}}\left( {{\varepsilon _1}} \right) - {A_1}{g_1}\left( { - \frac{1}{\rho },\frac{{{\Theta _B}}}{\rho },{\varepsilon _1},{\varepsilon _6}} \right) - {A_1}{g_2}\left( {{\varepsilon _3},{\varepsilon _4},{\varepsilon _6}} \right)
	\label{H23212}
\end{aligned}
\end{equation}
\hrulefill
\end{figure*}

\setcounter{equation}{54} 
\begin{figure*}[ht]
\begin{equation}
\begin{aligned}
	{T_3} &= \Pr \left\{ {{G_B} > {\varepsilon _1},\frac{{{\Theta _B}{G_B}}}{{\rho {G_B} + 1}} < {G_F} < {G_B},{G_F} < {\varepsilon _0}} \right\}\\
	&= \Pr \left\{ {\frac{{{\Theta _B}{G_B}}}{{\rho {G_B} + 1}} < {G_F} < {G_B},{\varepsilon _1} < {G_B} < {\varepsilon _0}} \right\} + \Pr \left\{ {\frac{{{\Theta _B}{G_B}}}{{\rho {G_B} + 1}} < {G_F} < {\varepsilon _0},{G_B} > {\varepsilon _0}} \right\}\\
	&= {F_{{G_F}}}\left( {{\varepsilon _0}} \right)\int_{{\varepsilon _0}}^\infty  {{f_{{G_B}}}\left( y \right)dy}  + \int_{{\varepsilon _1}}^{{\varepsilon _0}} {{f_{{G_B}}}\left( y \right){F_{{G_F}}}\left( y \right)dy}  - \int_{{\varepsilon _1}}^\infty  {{f_{{G_B}}}\left( y \right){F_{{G_F}}}\left( {\frac{{{\Theta _B}{y}}}{{\rho {y} + 1}}} \right)dy} \\
	&= {F_{{G_F}}}\left( {{\varepsilon _0}} \right){{\bar F}_{{G_B}}}\left( {{\varepsilon _0}} \right) + \int_{{\varepsilon _1}}^{{\varepsilon _0}} {{f_{{G_B}}}\left( y \right){F_{{G_F}}}\left( y \right)dy}  - \left( {1 - {F_{{G_B}}}\left( {{\varepsilon _1}} \right) - \frac{{\lambda _B^m}}{{\Gamma \left( m \right)}}\sum\limits_{i = 0}^{m - 1} {\frac{{{{\left( {{\lambda _F}} \right)}^i}{\Phi _2}}}{{i!}}} } \right) \hfill \\
	&= {A_1}{\Phi _5} - {{\bar F}_{{G_F}}}\left( {{\varepsilon _0}} \right){{\bar F}_{{G_B}}}\left( {{\varepsilon _0}} \right) - {A_1}\sum\limits_{i = 0}^{m - 1} {\frac{{{{\left( {{\lambda _F}} \right)}^i}{\Phi _2}}}{{i!}}}	
	\label{T3001}	
\end{aligned}
\end{equation}
\hrulefill
\end{figure*}

\section{Proof of Corollary 2  }
\label{proofCorollary2}

When $\rho  \to \infty $, we have $F_{{G_X}}^\infty \left( x \right) \to \frac{{{{\left( {{\lambda _X}x} \right)}^m}}}{{m!}}$, ${\varepsilon _1} \to 0$, ${\varepsilon _2} \to 0$, ${\varepsilon _3} \to 0$,
$\frac{{{\Theta _B}{G_B}}}{{\rho {G_B} + 1}} \to \frac{{{\Theta _B}}}{\rho }$, and
${\omega } \to 1 - \frac{1}{{{\Theta _B}}}$. For ${\Theta _{th}} > \frac{{{\Theta _B}}}{{{\Theta _B} - 1}}$, we have
\begin{equation}
\begin{aligned}
&\Pr \left\{ {R_F^2 < {R_{{\rm{th}}}^F}} \right\} \\
&= \Pr \left\{ {\log \left( {1 + \frac{{\rho {\bar \omega }{G_F}}}{{1 + \rho {\omega }{G_F}}}} \right) < {R_{{\rm{th}}}^F}} \right\}\\
&\approx 1
\label{eq56}
\end{aligned}
\end{equation}
Thus, 
$T_{2a}^\infty$ and $T_{2b}^\infty$ are obtained as
\begin{equation}
T_{2a}^{\infty } = 0
\label{eq57}
\end{equation}
and
\begin{equation}
\begin{aligned}
T_{2b}^{\infty } &= \Pr \left\{ {{G_B} > {\varepsilon _1},{G_F} < \frac{{{\Theta _B}}}{\rho }} \right\}\\
&= \frac{1}{{m!}}{\left( {\frac{{{\lambda _F}{\Theta _B}}}{\rho }} \right)^m}\left( {1 - \frac{{{{\left( {{\varepsilon _1}{\lambda _B}} \right)}^m}}}{{m!}}} \right)
\end{aligned}
\end{equation}
respectively.

Similarly, 
$T_3^\infty$ is obtained as
\begin{equation}
\begin{aligned}
T_3^\infty  &= \Pr \left\{ {\frac{{{\Theta _B}}}{\rho } < {G_F} < {G_B},{\varepsilon _1} < {G_B} < {\varepsilon _0} } \right\} \\
&+ \Pr \left\{ {\frac{{{\Theta _B}}}{\rho } < {G_F} < {\varepsilon _0} ,{G_B} > {\varepsilon _0} } \right\} \\
& = \int_{{\varepsilon _1}}^{{\varepsilon _0}} {{f_{{G_B}}}\left( y \right)\left( {{F_{{G_F}}}\left( y \right) - F_{{G_F}}^\infty \left( {\frac{{{\Theta _B}}}{\rho }} \right)} \right)dy}  \\
&+ \left( {1 - F_{{G_F}}^\infty \left( {{\varepsilon _0}} \right)} \right)\left( {F_{{G_F}}^\infty \left( {{\varepsilon _0}} \right) - F_{{G_F}}^\infty \left( {\frac{{{\Theta _B}}}{\rho }} \right)} \right)\\
& \approx F_{{G_B}}^\infty \left( {{\varepsilon _0}} \right) - F_{{G_B}}^\infty \left( {{\varepsilon _1}} \right) + F_{{G_F}}^\infty \left( {{\varepsilon _0}} \right) \\
&- F_{{G_F}}^\infty \left( {\frac{{{\Theta _B}}}{\rho }} \right) - {A_1}\sum\limits_{i = 0}^{m - 1} {\frac{{{{\left( {{\lambda _F}} \right)}^i}\Phi _2^\infty }}{{i!}}}\\
& = \frac{{\lambda _B^m\left( {\varepsilon _0^m - \varepsilon _1^m} \right)}}{{m!}} + \frac{{\lambda _F^m}}{{m!}}\left( {\varepsilon _0^m - \frac{{\Theta _B^m}}{{{\rho ^m}}}} \right) \\
&- {A_1}\sum\limits_{i = 0}^{m - 1} {\frac{{\lambda _F^i\Phi _2^\infty }}{{i!}}}
\end{aligned}
\end{equation}
where $\Phi _2^\infty = \frac{{\varepsilon _0^{i + m} - \varepsilon _1^{i + m}}}{{i + m}}$.

\end{appendices}


\begin{thebibliography}{1}

\bibitem{WuQ2021JSAC}
Q. Wu, J. Xu, Y. Zeng, D. W. K. Ng, N. Al-Dhahir, R. Schober, and A. L. Swindlehurst, ``A comprehensive overview on 5G-and-beyond networks With UAVs: From Communications to sensing and intelligence,'' IEEE J. Sel. Areas Commun., 39 (10) (2021) 2912-2945.

\bibitem{LiuY2019WC}
Y. Liu, Z. Qin, Y. Cai, Y. Gao, G. Y. Li, and A. Nallanathan, ``UAV communications based on non-orthogonal multiple access,'' IEEE Wireless Commun., 26 (1) (2019) 52-57.

\bibitem{DingZ2017Mag}
Z. Ding, Y. Liu, J. Choi, Q. Sun, M. Elkashlan, C.-L. I, and H. V. Poor, ``Application of non-orthogonal multiple access in LTE and 5G networks,'' IEEE Commun. Mag., 55 (2) (2017) 185-191.

\bibitem{HouT20191TCOM}
T. Hou, Y. Liu, Z. Song, X. Sun, and Y. Chen, ``Multiple antenna aided NOMA in UAV networks: A stochastic geometry approach,'' IEEE Trans. Commun., 67 (2) (2019) 1031-1044.

\bibitem{HouT2020TVT}
T. Hou, Y. Liu, Z. Song, X. Sun, and Y. Chen, ``UAV-to-everything (U2X) networks relying on NOMA: A stochastic geometry model,'' IEEE Trans. Veh. Technol., 69 (7) (2020) 7558-7568.

\bibitem{HouT20192TCOM}
T. Hou, Y. Liu, Z. Song, X. Sun, and Y. Chen, ``Exploiting NOMA for UAV communications in large-scale cellular networks,'' IEEE Trans. Commun., 67 (10) (2019) 6897-6911.

\bibitem{HouT2020IoT}
T. Hou, Y. Liu, Z. Song, X. Sun, and Y. Chen, ``NOMA-enhanced terrestrial and aerial IoT networks with partial CSI,'' IEEE Internet Things J., 7 (4) (2020) 3254-3266.

\bibitem{KhanWU2021ITL}
W. U. Khan, N. Imtiaz, and I. Ullah, ``Joint optimization of NOMA‐enabled backscatter communications for beyond 5G IoT networks,'' Internet Technol. Lett., 4 (2) (2021) e265.

\bibitem{ZhaoN2019TCOM}
N. Zhao, X. Pang, Z. Li, Y. Chen, F. Li, Z. Ding, and M.-S. Alouini, ``Joint trajectory and precoding optimization for UAV-assisted NOMA networks,'' IEEE Trans. Commun., 67 (5) (2019) 3723-3735.

\bibitem{ZhaiD2021TCOM}
D. Zhai, H. Li, X. Tang, R. Zhang, Z. Ding, and F. R. Yu, ``Height optimization and resource allocation for NOMA enhanced UAV-aided relay networks,'' IEEE Trans. Commun., 69 (2) (2021) 962-975.

\bibitem{ZhangH2020SAC}
H. Zhang, J. Zhang, and K. Long, ``Energy efficiency optimization for NOMA UAV network with imperfect CSI,'' IEEE J. Sel. Areas Commun., 38 (12) (2020) 2798-2809.

\bibitem{WangW2021TCOM}
W. Wang, N. Zhao, L. Chen, X. Liu, Y. Chen, and D. Niyato, ``UAV-assisted time-efficient data collection via uplink NOMA,'' IEEE Trans. Commun., 69 (11) (2021) 7851-7863.

\bibitem{LiuM2020IoT}
M. Liu, G. Gui, N. Zhao, J. Sun, H. Gacanin, and H. Sari, ``UAV-aided air-to-ground cooperative nonorthogonal multiple access,'' IEEE Internet Things J., 7 (4) (2020) 2704-2715.

\bibitem{ShirvanimoghaddamM2017Mag}
M. Shirvanimoghaddam, M. Dohler, and S. J. Johnson, ``Massive non-orthogonal multiple access for cellular IoT: Potentials and limitations,'' IEEE Commun. Mag., 55 (9) (2017) 55-61.

\bibitem{ParkD2005TCOM}
D. Park, H. Seo, H. Kwon, and B. G. Lee, ``Wireless packet scheduling based on the cumulative distribution function of user transmission rates,'' IEEE Trans. Commun., 53 (11) (2005) 1919-1929.

\bibitem{DingZ2021TCOM}
Z. Ding, R. Schober, and H. V. Poor, ``A new QoS-guarantee strategy for NOMA assisted semi-grant-free transmission,'' IEEE Trans. Commun., 69 (11) (2021) 7489-7503.

\bibitem{SunY2021TVT}
Y. Sun, Z. Ding, and X. Dai, ``A new design of hybrid SIC for improving transmission robustness in uplink NOMA,'' IEEE Trans. Veh. Technol., 70 (5) (2021) 5083-5087.

\bibitem{YangZ2020WCL}
Z. Yang, P. Xu, J. Ahmed Hussein, Y. Wu, Z. Ding, and P. Fan, ``Adaptive power allocation for uplink non-orthogonal multiple access with semi-grant-free transmission,'' IEEE Wireless Commun. Lett., 9 (10) (2020) 1725-1729.

\bibitem{LuH2022TWC}
H. Lu, X. Xie, Z. Shi, H. Lei, H. Yang, and J. Cai, ``Advanced NOMA assisted semi-grant-free transmission schemes for randomly distributed users,'' IEEE Trans. Wireless Commun., 22 (7) (2023) 4638-4653.

\bibitem{ZhangC2022TWC}
C. Zhang, Y. Liu, and Z. Ding, ``Semi-grant-free NOMA: A stochastic geometry model,'' IEEE Trans. Wireless Commun., 21 (2) (2022) 1197-1213.

\bibitem{ZhangC2021WCL}
C. Zhang, Y. Liu, W. Yi, Z. Qin, and Z. Ding, ``Semi-grant-free NOMA: Ergodic rates analysis with random deployed users,'' IEEE Wireless Commun. Lett., 10 (4) (2021) 692-695.

\bibitem{LeiH2022TWC}
H. Lei, F. Yang, H. Liu, I. Shafique Ansari, K. J. Kim, and T. A. Tsiftsis, ``On secure NOMA-aided semi-grant-free systems,'' IEEE Trans. Wireless Commun., doi: 10.1109/twc.2023.3275946, (2023) 1-17.

\bibitem{ChenJ2022JSAC}
J. Chen, L. Guo, J. Jia, J. Shang, and X. Wang, ``Resource allocation for IRS assisted SGF NOMA transmission: A MADRL approach,'' IEEE J. Sel. Areas Commun., 40 (4) (2022) 1302-1316.

\bibitem{YueX2018TCOMUN}
X. Yue, Z. Qin, Y. Liu, S. Kang, and Y. Chen, ``A unified framework for non-orthogonal multiple access,'' IEEE Trans. Commun., 66 (11) (2018) 5346-5359.

\bibitem{YueX2020TWC}
X. Yue, Y. Liu, Y. Yao, X. Li, R. Liu, and A. Nallanathan, ``Secure communications in a unified non-orthogonal multiple access framework,'' IEEE Trans. Wireless Commun., 19 (3) (2020) 2163-2178.

\bibitem{VaeziM2019WC}
M. Vaezi, R. Schober, Z. Ding, and H. V. Poor, ``Non-orthogonal multiple access: Common myths and critical questions,'' IEEE Wireless Commun., 26 (5) (2019) 174-180.

\bibitem{Zhou2018TVT}
Y. Zhou, P. L. Yeoh, H. Chen, Y. Li, R. Schober, L. Zhuo, and B. Vucetic, ``Improving physical layer security via a UAV friendly jammer for unknown eavesdropper location,'' IEEE Trans. Veh. Technol., 67 (11) (2018) 11280-11284. 

\bibitem{Hourani2014WCL}
A. Al-Hourani, S. Kandeepan, and S. Lardner, ``Optimal LAP altitude for maximum coverage,'' IEEE Wireless Commun. Lett., 3 (6) (2014) 569-572.

\bibitem{Azari2018TCOM}
M. Azari, F. Rosas, K.-C. Chen, and S. Pollin, ``Ultra reliable UAV communication using altitude and cooperation diversity,'' IEEE Trans. Commun., 66 (1) (2018) 330-344.	

\bibitem{GradshteynBook}
I. S. Gradshteyn and I. M. Ryzhik, Table of Integrals, Series and Products, 7th. San Diego, CA: Academic Press, 2007.

\bibitem{Abramowitz1972Book}
M. Abramowitz and I. A. Stegun, Handbook of Mathematical Functions with Formulas, Graphs, and Mathematical Tables, 9th. New York, NY, USA: Dover Press, 1972.


\end{thebibliography}
\end{document}